\def\CSP{{\rm CSP}} 
\def\NCSP{{\rm \#CSP}} 
\def\A{{\bf A}} 
\def\B{{\bf B}} 
\def\cC{{\mathcal C}} 
\def\cD{{\mathcal D}} 
\def\zN{{\mathbb N}}
\def\vc#1#2{#1 _1\zd #1 _{#2}}
\def\zd{,\ldots,} 
\let\dl=\delta 
\let\ve=\varepsilon
\let\vf=\varphi 
\let\vr=\varrho
\let\tm=\times
\let\sse=\subseteq 
\newcommand{\nat}{\mathbb{N}}
\newcommand{\clique}{\textsc{Clique}}
\newcommand{\nclique}{\#\clique}
\newcommand{\tuple}[1]{\mathbf{#1}}
\newcommand{\ar}[1]{\text{ar}(#1)}
\newcommand{\tw}[1]{\text{$tw$}(#1)}
\newcommand{\ignore}[1]{}
\def\Pr{\mathop{{}\rm Pr}\nolimits}
\def\H{{\bf H}} 
\newcommand{\prob}[3]{
  \vbox{
  \begin{description}
    \item[\bf Name:] #1
    \vspace{-1.75ex}
    \item[\bf Input:] #2  
    \vspace{-1.75ex}
    \item[\bf Output:] #3
  \end{description}
}}
\newcommand{\probfpt}[4]{
\vbox{
  \begin{description}
    \item[\bf Name:] #1
    \vspace{-1.75ex}
    \item[\bf Input:] #2  
    \vspace{-1.75ex}
    \item[\bf Parameter:] #3  
    \vspace{-1.75ex}
    \item[\bf Output:] #4
  \end{description}
}}
\title{Approximate counting CSP seen from the other side\thanks{An extended
abstract of this work appeared in the \emph{Proceedings of the 44th
International Symposium on Mathematical Foundations of Computer Science
(MFCS'19)}~\cite{Bulatov19:mfcs}. Andrei Bulatov was supported by an NSERC Discovery grant. Stanislav
\v{Z}ivn\'y was supported by a Royal Society University Research Fellowship.
This project has received funding from the European Research Council (ERC) under
the European Union's Horizon 2020 research and innovation programme (grant
agreement No 714532). The paper reflects only the authors' views and not the
views of the ERC or the European Commission. The European Union is not liable
for any use that may be made of the information contained therein.}}
\begin{document}

\newtheorem{theorem}{Theorem}
\newtheorem{corollary}[theorem]{Corollary}
\newtheorem{prop}[theorem]{Proposition} 
\newtheorem{problem}[theorem]{Problem}
\newtheorem{lemma}[theorem]{Lemma} 
\newtheorem{remark}[theorem]{Remark}
\newtheorem{observation}[theorem]{Observation}
\theoremstyle{plain}
\newtheorem*{lemma*}{Lemma}

\author{
Andrei A. Bulatov\\
School of Computing Science, Simon Fraser University, Canada\\
\texttt{abulatov@sfu.ca}
\and
Stanislav \v{Z}ivn\'{y}\\
Department of Computer Science, University of Oxford, UK\\
\texttt{standa.zivny@cs.ox.ac.uk}
}
\date{}
\maketitle

\begin{abstract}
In this paper we study the complexity of counting Constraint Satisfaction
Problems (CSPs) of the form $\NCSP(\cC, -)$, in which the goal is, given a relational
structure $\A$ from a class $\cC$ of structures and an arbitrary structure $\B$,
to find the number of homomorphisms from $\A$ to $\B$. Flum and Grohe showed 
that $\NCSP(\cC, -)$ is solvable in polynomial time if
$\cC$ has bounded treewidth [FOCS'02]. Building on the work of Grohe~[JACM'07]
on decision CSPs, 
Dalmau and Jonsson then showed that, if $\cC$ is a recursively enumerable class 
of relational structures of bounded arity, then assuming FPT
$\ne$ \#W[1], there are no other cases of $\NCSP(\cC, -)$
solvable exactly in polynomial time (or even fixed-parameter time)~[TCS'04].

We show that, assuming FPT $\ne$ W[1] (under randomised parameterised reductions)
and for $\cC$ satisfying certain general conditions, $\NCSP(\cC,-)$ is not solvable 
even \emph{approximately} for $\cC$ of unbounded treewidth; that is,
there is no fixed parameter tractable (and thus also not fully polynomial) randomised 
approximation scheme for $\NCSP(\cC, -)$. In particular, our condition
generalises the case when $\cC$ is closed under taking minors.
\end{abstract}

\section{Introduction}
\label{sec:intro}

The Constraint Satisfaction Problem (CSP) asks to decide the existence of a 
homomorphism between two given relational structures (or to find the number 
of such homomorphisms). It has been used to model a vast variety of combinatorial
problems and has attracted much attention. Since the general CSP is NP-complete
(\#P-complete in the counting case) and because one needs to model specific 
computational problems, various restricted versions of the CSP have been 
considered. More precisely, let $\cC$ and $\cD$ be two classes of relational 
structures. In this paper we will assume that structures from $\cC,\cD$ only 
have predicate symbols of bounded arity. The \emph{constraint satisfaction 
problem} (CSP) parameterised by $\cC$ and $\cD$ is
the following computational problem, denoted by $\CSP(\cC, \cD)$: given
$\A\in\cC$ and $\B\in\cD$, is there a homomorphism from $\A$ to $\B$? 
CSPs in which both input structures are restricted have not received
much attention (with a notable exception of matrix partitions 
\cite{Feder05:matrix,Feder07:matrix} and assorted graph problems
on restricted classes of graphs). However, the two most natural restrictions
have been intensively studied over the last two decades.
Let $-$ denote the class of all (bounded-arity) relational structures, or, equivalently,
indicate that there are no restrictions on the corresponding input structure.

Problems of the form $\CSP(-, \{\B\})$, where $\B$ is a fixed finite relational 
structure, are known
as \emph{nonuniform} or \emph{language-restricted} CSPs~\cite{Kolaitis00:jcss}. For 
instance, if $ \B=K_3$ is the complete graph
on $3$ vertices then $\CSP(-, \{\B\})$ is the standard {\sc 3-Colouring}
problem~\cite{Garey79:intractability}. The study of nonuniform CSPs has been 
initiated by Schaefer \cite{Schaefer78:complexity} who considered the case 
of $\CSP(-, \{\B\})$ for 2-element structures $\B$. The complexity of 
$\CSP(-, \{\H\})$, for a fixed graph $\H$, was studied 
under the name of $\H$-colouring by Hell and Ne\v{s}et\v{r}il 
\cite{Hell90:h-coloring}.
General nonuniform CSPs have been studied extensively since the seminal paper 
of Feder and Vardi~\cite{Feder98:monotone} who in particular proposed the so-called
Dichotomy Conjecture stating that every nonuniform CSP is either solvable in 
polynomial time or is NP-complete. The complexity of nonuniform CSPs has been 
resolved only recently in two independent papers by Bulatov~\cite{Bulatov17:focs} 
and Zhuk~\cite{Zhuk17:focs}, which confirmed the dichotomy conjecture of 
Feder and Vardi and also its algebraic version~\cite{Bulatov05:classifying}.

CSPs restricted on the other side, that is, of the form $\CSP(\cC, -)$, where $\cC$ 
is a fixed (infinite) class of finite relational
structures, are known as \emph{structurally-restricted} CSPs. 
For instance, if $\cC=\cup_{k\ge 1}\{K_k\}$ is the class of cliques of all
sizes then $\CSP(\cC, -)$ is the standard {\clique} 
problem~\cite{Garey79:intractability}. In this case the complexity of CSPs is
related to various ``width'' parameters of the associated class of graphs. 
For a relational structure $\A$ let $G(\A)$ denote the Gaifman graph of
$\A$, that is, the graph whose vertices are the elements of $\A$, and 
vertices $v,w$ are connected with an edge whenever $v$ and $w$ occur in
the same tuple of some relation of $\A$. Then $G(\cC)$ denotes the class of
Gaifman graphs of structures from $\cC$, and we refer to the treewidth of 
$G(\A)$ as the treewidth of $\A$.
Dalmau, Kolaitis, and Vardi showed that $\CSP(\cC, -)$ is in PTIME if $\cC$
has bounded treewidth modulo homomorphic equivalence~\cite{Dalmau02:width}.
Grohe then showed that, assuming FPT $\ne$ W[1], there are no other cases of
(bounded arity) $\CSP(\cC, -)$ solvable in polynomial time (or even
fixed-parameter time, where the parameter is the size of the left-hand side
structure)~\cite{Grohe07:jacm}. The case of structures with unbounded
arity was extensively studied by Gottlob et al.\ who introduced the concept
of bounded hypertree width in an attempt to characterise structurally
restricted CSPs solvable in polynomial time~\cite{Gottlob02:jcss-hypertree}. The search for a right condition
is still going on, and the most general structural property that guarantees that 
$\CSP(\cC, -)$ is solvable in polynomial time is fractional hypertree width
introduced by Grohe and Marx \cite{Grohe14:talg-fractional}. 
Finally, Marx showed that the most general condition, assuming the
exponential-time hypothesis, that captures structurally-restricted CSPs solvable
in fixed-parameter time is that of submodular width~\cite{Marx13:jacm}.

An important problem related to the CSP is counting: Given a CSP instance,
that is, two relational structures $\A$ and $\B$, find the number of homomorphisms
from $\A$ to $\B$. We again consider restricted versions of this problem. 
More precisely, for two classes $\cC$ and $\cD$ of relational structures, 
$\NCSP(\cC, \cD)$ denotes the following computational problem:
 given $\A\in\cC$ and $\B\in\cD$, how many homomorphisms are there 
 from $\A$ to $\B$? This problem is referred to as a \emph{counting} CSP.
 Similar to decision CSPs, problems of the form $\NCSP(-, \cD)$ and
 $\NCSP(\cC, -)$ are the two most studied ways to restrict the counting
 CSP, and the research on these problems follows a similar pattern as their
 decision counterparts. 

For a fixed finite relational structure $\B$, the complexity of the nonuniform
problem $\NCSP(-, \{\B\})$ was characterised for graphs by Dyer and 
Greenhill \cite{Dyer00:counting} and for 2-element structures by Creignou and
Hermann \cite{Creignou96:complexity}. The complexity of the general nonuniform 
counting CSPs was resolved by Bulatov~\cite{Bulatov13:jacm-dichotomy} and 
Dyer and Richerby \cite{Dyer13:sicomp-csp}. As in the case of the decision version the
complexity of nonuniform counting CSPs is determined by their algebraic properties,
and every such CSP is either solvable in polynomial time or is \#P-complete.
These dichotomy results were later extended to the case of weighted counting
CSP, for which Cai and Chen obtained a complexity classification of counting 
CSPs with complex weights~\cite{Cai17:jacm}.

The complexity of counting CSPs with restrictions on the left hand side structures 
also turns out to be related to treewidth. Flum and Grohe showed that 
$\NCSP(\cC, -)$ is solvable in polynomial time if $\cC$ has bounded
treewidth~\cite{Flum02:focs}. Dalmau and Jonsson then showed that, assuming FPT
$\ne$ \#W[1], there are no other cases of (bounded arity) $\NCSP(\cC, -)$
solvable exactly in polynomial time (or, again, even fixed-parameter
time)~\cite{Dalmau04:side}. Note that the result of Dalmau and Jonsson states that
the class $\cC$ itself has to be of bounded treewidth, while in Grohe's 
characterisation of polynomial-time solvable decision CSPs of the form $\CSP(\cC,-)$
it is the class of cores of structures from $\cC$ that has to have bounded
treewidth. There has also been some research on 
counting problems over structures of unbounded arity. First, it was showed that
notions sufficient for polynomial-time solvability of decision CSPs can be
lifted to the problem of counting CSPs. In particular, the polynomial-time
solvability of $\NCSP(\cC,-)$ was shown by Pichler and Skritek
for $\cC$ of bounded hypertree width~\cite{Pichler13:jcss}, by Mengel for $\cC$
of bounded fractional hypertree width~\cite{Menge13:phd}, and finally by
Farnqvist for $\cC$ of bounded
submodular width~\cite{Farnqvist13:phd}. Secondly, the work of Brault-Baron et al. showed that the (unbounded arity)
structurally-restricted $\NCSP(\cC,-)$ are solvable in polynomial time for the
class $\cC$ of $\beta$-acyclic hypergraphs~\cite{Brault-baron15:stacs}.\footnote{Brault-Baron
et al.\ \cite{Brault-baron15:stacs} show their tractability results for so-called CSPs with default values,
which in particular includes $\NCSP(\cC,-)$ as defined here.}

The results we have mentioned so far concern exact counting; however,
many applications of counting problems allow for approximation algorithms as well.
For nonuniform CSPs the complexity landscape is much more complicated
than the dichotomy results for decision CSPs or exact counting. The analogue
of ``easily solvable'' problems in this case are those that admit a Fully Polynomial 
Randomised Approximation Scheme (FPRAS): a randomised algorithm that, given an instance
and an error tolerance $\ve\in(0,1)$ returns in time polynomial in the size of the instance
and $\ve^{-1}$ a result which is with high probability a multiplicative
$(1+\ve)$-approximation of 
the exact solution. The parameterised version of this algorithmic model is known
as a Fixed Parameter Tractable Randomised Approximation Scheme (FPTRAS). However, unlike exact counting or the decision CSP, it is not very likely there is a concise and clear complexity classification. For instance,  
Dyer et al.\ \cite{Dyer04:algorithmica} identified a sequence of counting CSPs, Bipartite $q$-Colouring, that are likely to attain an infinite hierarchy of 
approximation 
complexities. Only a handful of results exist for 
the approximation complexity of counting nonuniform CSPs. The approximation 
complexity of $\NCSP(-,\{\B\})$ for 2-element structures $\B$ was characterised 
by Dyer et al.\ \cite{Dyer10:trichotomy}, where a trichotomy theorem was proved: 
for every
2-element structure $\B$ the problem $\NCSP(-,\{\B\})$ either admits an FPRAS,
or is interreducible with \#SAT or with the problem \#BIS of counting independent
sets in bipartite graphs. Apart from this only partial results are known. If
$\B$ is a connected graph and $\NCSP(-,\{\B\})$ does not admit an FPRAS, 
then Galanis, Goldberg and Jerrum \cite{Galanis16:h-coloring} showed that
$\NCSP(-,\{\B\})$ is at least as hard as \#BIS. Also, if every unary relation is a 
part of $\B$ a complexity classification of $\NCSP(-,\{\B\})$ can be extracted
from the results of Chen et al.\ \cite{Chen15:conservative},\footnote{Chen et al.\ 
\cite{Chen15:conservative} studied the weighted version of $\NCSP(-,\{\B\})$,
and although their result does not provide a complete characterisation of 
the weighted problem, it allows to determine the complexity of $\NCSP(-,\{\B\})$ 
as defined here.} see also~\cite{Galanis17:toct}.

\paragraph*{Our Contribution} 
It should be clear by now that the picture painted by the short survey above
misses one piece: the approximation complexity of structurally restricted CSPs. 
This is the main contribution of this paper. 

Let $\cC$ be a class of bounded-arity relational structures. If the treewidth of
$\cC$ modulo homomorphic equivalence is unbounded then, by Grohe's
result~\cite{Grohe07:jacm}, it is hard to test for the existence of a
homomorphism from $\A$ to $\B$, where $\A\in\cC$, for any instance $\A,\B$ of
$\CSP(\cC,-)$. Using standard techniques (see, e.g., the proof
of~\cite[Proposition~3.16]{Meeks16:dam}), this implies, assuming that FPT $\ne$
\#W[1] (under randomised parameterised reductions~\cite{Downey98:tcs}), that
there is not an FPTRAS for
$\NCSP(\cC, -)$, let alone an FPRAS.
Consequently, the tractability boundary for approximate counting of
$\NCSP(\cC, -)$ lies between bounded treewidth and bounded treewidth modulo
homomorphic equivalence.

As our main result, we show that for $\cC$ such that a certain class of graphs
(to be defined later) is a subset of $G(\cC)$, $\NCSP(\cC,-)$ cannot be solved 
even \emph{approximately} for $\cC$ of unbounded treewidth, assuming FPT 
$\ne$ W[1] (under randomised parameterised reductions). Before we introduce 
the classes of graphs we use, we review how the hardness of $\CSP(\cC, -)$ or
$\NCSP(\cC, -)$ is usually proved.

We follow the hardness proof of Grohe for decision CSPs~\cite{Grohe07:jacm},
which was lifted to exact counting CSPs by Dalmau and
Jonsson~\cite{Dalmau04:side}. In fact Grohe's result had an important
precursor~\cite{Grohe01:evaluation}. The key idea is a reduction from the
parameterised {\clique} problem to $\CSP(\cC, -)$. Let $G=(V,E)$ and $k$ be an
instance of the $p$-{\clique} problem, where $k$ is the parameter. Broadly
speaking, the reduction works as follows. For a class of unbounded treewidth,
the Excluded Grid Theorem of Robertson and Seymour~\cite{Robertson86:excluding}
guarantees the existence of the $(k\tm{k\choose 2})$-grid (as a minor of some
structure $\A\in\cC$), which is used to encode the existence of a $k$-clique in
$G$ as a certain structure $\B$. The encoding usually means that $G$ has a
$k$-clique if and only if there is a homomorphism from $\A$ to $\B$ whose image covers a copy of the grid built in $\B$. For decision CSPs, the correctness of the
reduction --- that there are no homomorphisms from $\A$ to $\B$ not satisfying this condition --- is
achieved by dealing with coloured grids~\cite{Grohe01:evaluation} or by dealing
with structures whose cores have unbounded treewidth (with another complication
caused by minor maps)~\cite{Grohe07:jacm}. For the complexity of exact counting
CSPs, the correctness of the reduction \cite{Dalmau04:side} is achieved by 
employing interpolation or the inclusion-exclusion principle, a common tool in 
exact counting.

None of these two methods can be applied to approximate solving 
$\NCSP(\cC,-)$. We cannot assume that the class of cores of $\cC$ has unbounded treewidth, because then 
by \cite{Grohe07:jacm} even the decision problem cannot be solved in polynomial
time, which immediately rules out the existence of an FPRAS. Interpolation 
techniques such as the inclusion-exclusion principle are also well known to be incompatible with
approximate counting. The standard tool in approximate counting to achieve 
the same goal of prohibiting homomorphisms except ones from a certain restricted 
type, is to use gadgets to amplify the number of homomorphisms of the required 
type. We give a reduction from $p$-{\nclique} to $\NCSP(\cC, -)$ by using
``fan-grids'', formally introduced in Section~\ref{sec:fan}. Unfortunately, 
due to the delicate nature of approximation preserving reductions, we cannot use
minors and minor maps and have to assume that ``fan-grids'' themselves are
present in $G(\cC)$. 
(In Section~\ref{sec:conclusion}, we will briefly discuss how a weaker
assumption can be used to obtain the same result.)
By the Excluded Grid Theorem~\cite{Robertson86:excluding}, if $\cC$ is \emph{closed under taking minors},
then $G(\cC)$ contains all the fan-grids (details are given in
Section~\ref{sec:fan} and in particular in Lemma~\ref{lem:fan-minor}). Thus, the
classes $\cC$ for which we establish the hardness of $\NCSP(\cC, -)$ includes
all classes $\cC$ that are closed under taking minors.\footnote{We remark that
the hardness for $\cC$ closed under taking minors follows from Grohe's
classification~\cite{Grohe07:jacm} of decision CSPs. Indeed, for $\cC$ of
unbounded treewidth, the Excluded Grid Theorem~\cite{Robertson86:excluding}
gives grids of arbitrary sizes. Since every planar graph is a minor of some
grid~\cite{Diestel10:graph}, $\cC$ contains all planar graphs. As there exist
planar graphs of arbitrarily large treewidth that are also minimal with respect to
homomorphic equivalence, Grohe's result gives W[1]-hardness of $\CSP(\cC, -)$
and hence $\NCSP(\cC, -)$ cannot have an FPRAS/FPTRAS.}

\section{Preliminaries}\label{sec:preliminaries}

$\zN$ denotes the set of positive integers. For every $n\in\zN$, we let
$[n]=\{1,\ldots,n\}$.

\subsection{Relational Structures and Homomorphisms}

A \emph{relational signature} is a finite set $\tau$ of relation symbols $R$,
each with a specified arity $\ar{R}$. A \emph{relational structure} $\A$
over a relational signature $\tau$ (or a $\tau$-structure, for short)
is a finite universe $A$ together with one relation $R^\A\subseteq
A^{\ar{R}}$ for each symbol $R \in \tau$.
The size $\|\A\|$ of a relational structure $\A$ is defined as
\[
  \|\A\|=|\tau|+|A|+\sum_{R\in\tau}|R^\A|\cdot\ar{R}.
\]

Let $R$ be a binary relational symbol. We will sometimes view graphs as $\{R\}$-structures.

A \emph{homomorphism} from a relational $\tau$-structure $\A$ (with universe
$A$) to a relational $\tau$-structure $\B$ (with universe $B$) is a mapping
$\vf:A \to B$ such that for all $R\in\tau$ and all tuples $\tuple{x}\in R^\A$ we
have $\vf(\tuple{x})\in R^\B$. 

Two structures $\A$ and $\B$ are \emph{homomorphically equivalent} if there is a
homomorphism from $\A$ to $\B$ and a homomorphism from $\B$ to $\A$.

Let $\cC$ be a class of relational structures. We say that $\cC$ has \emph{bounded arity} if there is a constant $r\ge 1$ such that for every $\tau$-structure $\A\in\cC$ and
$R\in\tau$, we have that $\ar{R}\leq r$.

\subsection{Treewidth and Minors}

The notion of treewidth, introduced by Robertson and
Seymour~\cite{Robertson84:minors3}, is a well-known measure of the tree-likeness of a
graph~\cite{Diestel10:graph}.
Let $G=(V(G),E(G))$ be a graph. A \emph{tree decomposition} of $G$ 
is a pair $(T,\beta)$ where $T=(V(T),E(T))$ is a tree and $\beta$ is a function that maps 
each node $t\in V(T)$ to a subset of $V(G)$ such that
\begin{enumerate}
\item $V(G)=\bigcup_{t\in V(T)} \beta(t)$, 
\item for every $u\in V(G)$, the set 
$\{t\in V(T)\mid u\in \beta(t)\}$ induces a connected subgraph of $T$, and 
\item for every edge $\{u,v\}\in E(G)$, there is a node $t\in V(T)$ with $\{u,v\}\subseteq \beta(t)$. 
\end{enumerate} 
The \emph{width} of the decomposition $(T,\beta)$ is $\max\{|\beta(t)|\mid t\in V(T)\}-1$. 
The \emph{treewidth} $\tw{G}$ of a graph $G$ is the minimum width over all its tree decompositions. 

Let $\A$ be a relational structure over relational signature $\tau$. 
The \emph{Gaifman graph} (also known as \emph{primal graph}) of $\A$,
denoted by $G(\A)$, is the graph whose vertex set is the
universe of $\A$ and whose edges are the
pairs $(u,v)$ for which there is a tuple $\tuple{x}$ and a relation symbol $R\in \tau$ 
such that $u,v$ appear in $\tuple{x}$ and $\tuple{x}\in R^{\A}$. 

Let $\cC$ be a class of relational structures. We say that $\cC$ has
\emph{bounded treewidth} if there exists $w\ge 1$ such that
$\tw{\A}=\tw{G(\A)}\leq w$ for every $\A\in\cC$. We say that $\cC$ has
\emph{bounded treewidth modulo homomorphic equivalence} if there exists $w\ge 1$
such that every $\A\in\cC$ is homomorphically equivalent to $\A'$ with
$\tw{\A'}\leq w$.

A graph $H$ is a \emph{minor} of a graph $G$ if $H$ is isomorphic to a graph
that can be obtained from a subgraph of $G$ by contracting edges (for more
details, see, e.g.,~\cite{Diestel10:graph}). 

For $k,\ell\ge 1$, the $(k\tm\ell)$-grid is the graph with the
vertex set $[k]\tm[\ell]$ and an edge between $(i,j)$ and $(i',j')$ iff
$|i-i'|+|j-j'|=1$. Treewidth and minors are intimately connected via the
celebrated Excluded Grid Theorem of Robertson and Seymour.

\begin{theorem}[\cite{Robertson86:excluding}]\label{thm:excluded}
  For every $k$ there exists a $w(k)$ such that the $(k\tm k)$-grid is a
  minor of every graph of treewidth at least $w(k)$.
\end{theorem}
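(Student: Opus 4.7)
The plan is to exploit the duality between treewidth and obstructions such as brambles (or equivalently tangles), and then to use a routing argument to extract a grid minor from a sufficiently rich obstruction. Concretely, if $\tw{G}\ge w$ then $G$ contains a bramble of order at least $w+1$, i.e.\ a family of pairwise touching connected subgraphs that cannot be simultaneously hit by any vertex set of size less than $w+1$. The objective is to choose $w(k)$ large enough that the resulting bramble forces a $(k\tm k)$-grid as a minor of $G$.

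First I would convert the bramble of high order into a \emph{well-linked} vertex set $S\sse V(G)$, meaning a set such that for any two disjoint subsets $X,Y\sse S$ of size at most some threshold $f(k)$, there exist $|X|$ internally vertex-disjoint $X$-$Y$ paths in $G$. Such linkage can be extracted from a bramble of sufficient order via Menger-type flow arguments combined with the submodularity of the connectivity function; the bramble order is traded for both $|S|$ and the linkage threshold $f(k)$. It is essentially here that a large $w(k)$ is needed, since $f(k)$ must be big enough to support all subsequent rerouting.

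Second I would build the grid greedily inside $S$. Fix an enumeration of the $k^2$ grid vertices; at each stage use the linkage of $S$ to route a new path realising the next grid edge, and then contract the paths constructed so far into the branch sets of the evolving minor. The key is to reserve a constant fraction of $S$ as ``fresh'' linkage after each step, so that when the first few rows and columns have been built there are still enough unused well-linked vertices to complete the remaining $\Theta(k^2)$ grid edges along pairwise internally disjoint paths.

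The hard part is the quantitative balancing. After many rounds of routing and contraction the unused portion of $S$ and its residual connectivity shrink, and one must ensure they stay large enough to finish the construction; controlling the interaction between routed paths and the remaining linkage is where the argument becomes delicate. In the original proof of Robertson and Seymour this tension yields only a tower-type bound on $w(k)$, which nevertheless suffices for the existential statement of Theorem~\ref{thm:excluded}. Reducing the dependence to polynomial (as later achieved by Chekuri--Chuzhoy and subsequent improvements) requires substantially more sophisticated routing, but is not needed here.
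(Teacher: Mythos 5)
The paper does not prove this statement at all: Theorem~\ref{thm:excluded} is the Excluded Grid Theorem, imported as a black box from Robertson and Seymour~\cite{Robertson86:excluding}, so there is no in-paper argument to compare yours against. Judged on its own terms, your proposal is a reasonable roadmap of how modern expositions (Diestel's book, the Chekuri--Chuzhoy line of work) organise the proof, but it is a plan rather than a proof, and the one step that carries all the difficulty is precisely the one you leave open.

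Concretely: your first step already invokes the hard direction of the Seymour--Thomas duality (that $\tw{G}\ge w$ \emph{forces} a bramble of order $w+1$; the converse is the easy direction), which is fine as a citation but is itself a nontrivial theorem. More seriously, the third step --- ``build the grid greedily inside $S$, routing one grid edge at a time and reserving a fresh fraction of $S$'' --- is not how any of the known proofs work, and it is not clear it can be made to work as stated: well-linkedness of the residual set is not preserved in any controlled way when you delete a routed path, and a single path can separate the remaining well-linked vertices badly. The actual arguments avoid this by first producing an intermediate structure all at once (a large family of pairwise disjoint ``horizontal'' paths together with a crossing linkage, a mesh, or in Robertson--Seymour's original treatment a jungle/latticework), and then extracting the grid by an Erd\H{o}s--Szekeres-type untangling of the crossing paths. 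You correctly flag that ``the quantitative balancing'' is where the argument becomes delicate, but that flag is standing in for the entire combinatorial core of the theorem. For the purposes of this paper none of this matters --- the result is used as a citation, as is standard --- but as a standalone proof your write-up has a genuine gap at the routing step.
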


Let $\cC$ be a class of relational structures. We say that $\cC$ if \emph{closed
under taking minors} if for every $\A\in\cC$ and for every minor $H$ of $G(\A)$,
there is a structure $\A'\in \cC$ such that $G(\A')$ is isomorphic to $H$.

\section{Counting CSP}

\subsection{Exact Counting CSP}

Let $\cC$ be a class of relational structures. 
We will be interested in the computational complexity of the following problem.

\prob
{$\NCSP(\cC, -)$}
{Two relational structures $\A$ and $\B$ over the same signature with $\A\in\cC$.}
{The number of homomorphisms from $\A$ to $\B$.}

We say that $\NCSP(\cC, -)$ is in FP, the class of function problems solvable in
\emph{polynomial time}, if there is a deterministic algorithm that solves any
instance $\A,\B$ of $\NCSP(\cC, -)$ in time $(\|\A\|+\|\B\|)^{O(1)}$.

We will also consider the parameterised version of $\NCSP(\cC, -)$.

\probfpt
{$p$-$\NCSP(\cC, -)$}
{Two relational structures $\A$ and $\B$ over the same signature with $\A\in\cC$.}
{$\|\A\|$.}
{The number of homomorphisms from $\A$ to $\B$.}

We say that $p$-$\NCSP(\cC, -)$ is in FPT, the class of problems that are
\emph{fixed-parameter tractable}, if there is a deterministic algorithm that
solves any instance $\A,\B$ of $p$-$\NCSP(\cC, -)$ in time
$f(\|\A\|)\cdot\|\B\|^{O(1)}$, where $f:\mathbb{N} \to \mathbb{N}$ is an
arbitrary computable function. 

The class W[1], introduced in~\cite{Downey95:sicomp}, can be seen as an analogue
of NP in parameterised complexity theory. Proving W[1]-hardness of a problem
(under a parameterised reduction which may be randomised),
is a strong indication that the problem is not solvable in fixed-parameter time
as it is believed that FPT $\neq$ W[1]. For counting problems, \#W[1] is the
parameterised analogue of \#P. Similarly to the belief that FP $\neq$  \#P, it is
believed that FPT $\neq$ \#W[1]. We refer the reader
to~\cite{Flum06:parametrized} for the definitions of W[1] and \#W[1], and for
more details on parameterised complexity in general.

Dalmau and Jonsson established the following result.

\begin{theorem}[\cite{Dalmau04:side}]\label{thm:exact}

Assume FPT $\ne$ \#W[1] under parameterised reductions. Let $\cC$ be a
recursively enumerable class of relational structures of bounded arity. Then,
the following are equivalent: 

\begin{enumerate}
  \item $\NCSP(\cC, -)$ is in FP.
  \item $p$-$\NCSP(\cC, -)$ is in FPT.
  \item $\cC$ has bounded treewidth. 
\end{enumerate}
\end{theorem}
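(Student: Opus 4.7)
The plan is to establish the cycle $(3)\Rightarrow(1)\Rightarrow(2)\Rightarrow(3)$, with only the last implication being substantial. For $(3)\Rightarrow(1)$, I would invoke the standard Flum--Grohe dynamic programming algorithm: given $\A$ of treewidth at most $w$, compute a width-$w$ tree decomposition $(T,\beta)$, root it, and process $T$ bottom-up, keeping at each node $t$ and each assignment $\sigma\colon\beta(t)\to B$ the number of partial homomorphisms into $\B$ of the substructure of $\A$ induced on the union of bags below $t$ that extend $\sigma$. Each update is a sum-product over $|B|^{O(w)}$ entries, giving total time $(\|\A\|+\|\B\|)^{O(1)}\cdot|B|^{O(w)}$, which is polynomial in $\|\A\|+\|\B\|$ for bounded $w$. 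The implication $(1)\Rightarrow(2)$ is immediate, since any polynomial-time algorithm for $\NCSP(\cC,-)$ is trivially fixed-parameter tractable.

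For the contrapositive of $(2)\Rightarrow(3)$, suppose $\cC$ has unbounded treewidth; the aim is to show that $p$-$\NCSP(\cC,-)$ is \#W[1]-hard under parameterised (Turing) reductions, which together with FPT $\ne$ \#W[1] rules out FPT membership. I would reduce from $p$-$\nclique$. Given an instance $(G,k)$, apply the Excluded Grid Theorem (Theorem~\ref{thm:excluded}) to obtain the threshold $w(k\cdot\binom{k}{2})$; since $\cC$ is recursively enumerable and has unbounded treewidth, I can enumerate $\cC$ until locating some $\A\in\cC$ with $\tw{\A}$ exceeding that threshold, so that $G(\A)$ contains the $(k\tm\binom{k}{2})$-grid as a minor, witnessed by pairwise disjoint connected branch sets. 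I would then build a target structure $\B$ on a universe consisting of $V(G)$ together with auxiliary ``frame'' elements, with relations that force any homomorphism $\vf\colon\A\to\B$ respecting the grid minor to (i) assign, via its branch set, each ``row $i$'' a vertex $u_i\in V(G)$, and (ii) for every column indexed by $\{i,j\}\subseteq[k]$, certify that $(u_i,u_j)\in E(G)$. Thus, the ``faithful'' homomorphisms (those that are injective on rows and respect the branch-set structure) correspond bijectively, up to a multiplicative constant depending only on $\A$ and $k$, to $k$-cliques of $G$.

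The principal obstacle is that generic homomorphisms from $\A$ to $\B$ need not be faithful: rows may be collapsed or branch sets misused, contaminating the count. This is where I would deploy the inclusion--exclusion argument of Dalmau and Jonsson: for each partition $\pi$ of $[k]$, construct a modified target $\B_\pi$ in which the row-labels belonging to a common block of $\pi$ are identified, and query $\NCSP(\cC,-)$ on the instance $(\A,\B_\pi)$. A M\"obius inversion over the partition lattice of $[k]$ then expresses the count of row-injective, clique-encoding homomorphisms as a signed combination, with coefficients depending only on $k$, of these homomorphism counts, yielding the exact number of $k$-cliques of $G$. Since the number of partitions of $[k]$ depends only on the parameter and each query reuses the same $\A$, this is a valid parameterised Turing reduction. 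The hardest step is engineering $\B$ so that the off-grid portion of $\A$ contributes a well-controlled multiplicative factor and the M\"obius inversion cleanly separates the clique contribution; as the introduction emphasises, it is precisely this interpolation-style step that breaks down for approximate counting, motivating the fan-grid gadget-amplification approach developed in the remainder of the paper.
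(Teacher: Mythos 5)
This theorem is imported from Dalmau and Jonsson \cite{Dalmau04:side}; the paper gives no proof of it, so I am judging your reconstruction on its own terms. Your overall architecture is the standard one and matches the cited proof: Flum--Grohe dynamic programming over a tree decomposition for $(3)\Rightarrow(1)$, the trivial $(1)\Rightarrow(2)$, and for $(2)\Rightarrow(3)$ a reduction from $p$-{\nclique} that enumerates $\cC$ to find $\A$ with a $(k\tm\binom{k}{2})$-grid minor and encodes $G$ into a target $\B$ whose elements carry grid coordinates. Up to that point you are fine (including the quietly important points that the enumeration time is a computable function of $k$ alone and that one needs an FPT Turing reduction).

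The gap is in the interpolation step. The dominant source of ``contamination'' is not failure of row-injectivity but homomorphisms that \emph{fold} the grid: since the coordinate projection of $\B$ is itself a $(k\tm\binom{k}{2})$-grid and grids are bipartite with many non-surjective endomorphisms, a homomorphism from $\A$ can map the whole grid minor onto a small portion of the coordinate grid and contribute nothing clique-like. Your M\"obius inversion over partitions of $[k]$, with row labels identified \emph{in the target}, does not filter these out; worse, quotienting the target does not stand in a M\"obius-invertible relation to $\mathrm{hom}(\A,\B)$ at all (identifying target elements creates homomorphisms that do not lift, so the counts $\mathrm{hom}(\A,\B_\pi)$ are not a sum over a lattice of classes of homomorphisms into $\B$). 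The identity you seem to have in mind, $\mathrm{inj}(\A,\B)=\sum_\pi\mu(\pi)\,\mathrm{hom}(\A/\pi,\B)$, quotients the \emph{source}, which is not available here since $\A$ must remain in $\cC$. The Dalmau--Jonsson argument instead performs inclusion--exclusion over the $k\binom{k}{2}$ coordinate positions of the target: for each subset $S$ of positions one deletes the corresponding elements of $\B$ and queries $\mathrm{hom}(\A,\B_S)$; signed summation isolates the homomorphisms whose image is surjective onto the coordinate grid, and a surjective endomorphism of a finite grid is an automorphism, so these correspond (after handling the minor map, \`a la Grohe's Lemma~4.4, and dividing by a computable factor accounting for grid automorphisms and the off-grid part of $\A$) exactly to $k$-cliques. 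Since the number of subsets $S$ is $2^{k\binom{k}{2}}$, a function of the parameter only, this is a legitimate FPT Turing reduction. As written, your reduction would not correctly recover $N$.
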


The following problem is an example of a \#W[1]-hard problem, as established by
Flum and Grohe~\cite{Flum04:sicomp}.

\probfpt
{$p$-\nclique}
{A graph $G$ and $k\in\zN$.}
{$k$.}
{The number of cliques of size $k$ in $G$.}

Note that $p$-{\nclique} can be modelled as $p$-$\NCSP(\cC, -)$ if we set $\cC$ to be the set of cliques of all possible sizes.
The decision version of $p$-{\nclique} was shown to be W[1]-hard by Downey and
Fellows~\cite{DF95:fpt}.

\probfpt
{$p$-\clique}
{A graph $G$ and $k\in\zN$.}
{$k$.}
{Decide if $G$ contains a clique of size $k$.}

\subsection{Approximate Counting CSP}

In view of our complete understanding of the exact complexity of $\NCSP(\cC, -)$
for $\cC$ of bounded arity (cf. Theorem~\ref{thm:exact}), we will be interested
in \emph{approximation} algorithms for $\NCSP(\cC, -)$. In particular, are there
any new classes $\cC$ of bounded arity for which the problem $\NCSP(\cC, -)$ can
be solved efficiently (if only approximately)? We will provide a partial answer
to this question (cf. Theorem~\ref{thm:main}): for certain general bounded-arity
classes $\cC$ (which include classes that are \emph{closed under taking
minors}), the answer is no!

The notion of efficiency for approximate counting is that of a fully polynomial
randomised approximation scheme~\cite{Mitzenmacher2017} and its parameterised
analogue, a fixed parameter tractable randomised approximation
scheme, originally introduced by Arvind and Raman~\cite{Arvind02:isaac}. We now
define both concepts.

A \emph{randomised approximation scheme} (RAS) for a function $f:\Sigma^*\to\zN$
is a randomised algorithm that takes as input
$(x,\ve)\in\Sigma^*\tm (0,1)$ and produces as output an integer random
variable $X$ satisfying the condition 
$\Pr(|X-f(x)|\leq \ve f(x))\ge 3/4$. A RAS for a counting problem is
called \emph{fully polynomial} (FPRAS) if on input of size $n$ it runs in time
$p(n,\ve^{-1})$ for some fixed polynomial $p$. A RAS for a parameterised
counting problem is called \emph{fixed parameter tractable} (FPTRAS) if on input
of size $n$ with parameter $k$ it runs in time $f(k)\cdot p(n,\ve^{-1})$,
where $p$ is a fixed polynomial and $f$ is an arbitrary computable function.

To compare approximation complexity of (parameterised) counting problems 
two types of reductions are used. Suppose $f,g:\Sigma^*\to\zN$. 
An \emph{approximation preserving reduction} (AP-reduction) 
\cite{Dyer04:algorithmica} from $f$ to $g$ is a probabilistic oracle
Turing machine $M$ that takes as input a pair 
$(x,\ve)\in\Sigma^*\tm(0,1)$, and satisfies the following three 
conditions: (i) every oracle call made by $M$ is of the form $(w,\dl)$, 
where $w\in\Sigma^*$ is an instance of $g$, and $0<\dl<1$ is an error
bound satisfying $\dl^{-1}\le\mathsf{poly}(|x|,\ve^{-1})$; (ii) the TM 
$M$ meets the specification for being a randomised approximation 
scheme for $f$ whenever the oracle meets the specification for being 
a randomised approximation scheme for $g$; and (iii) the
running time of $M$ is polynomial in $|x|$ and $\ve^{-1}$. 

Similar to \cite{Meeks16:dam} we also use the parameterised version of 
AP-reductions. Again, let $f,g:\Sigma^*\to\zN$. 
A \emph{parameterised approximation preserving reduction} (parameterised AP-reduction)
from $f$ to $g$ is a probabilistic oracle
Turing machine $M$ that takes as input a triple 
$(x,k,\ve)\in\Sigma^*\tm(0,1)$, and satisfies the following three 
conditions: (i) every oracle call made by $M$ is of the form $(w,k',\dl)$, 
where $w\in\Sigma^*$ is an instance of $g$, $k'\le h(k)$ for some 
computable function $h$, and $0<\dl<1$ is an error
bound satisfying $\dl^{-1}\le\mathsf{poly}(|x|,\ve^{-1})$; (ii) the TM 
$M$ meets the specification for being a randomised approximation 
scheme for $f$ whenever the oracle meets the specification for being 
a randomised approximation scheme for $g$; and (iii) $M$ is fixed-parameter
tractable with respect to $k$ and polynomial in $|x|$ and $\ve^{-1}$. 

\subsection{Main Result}
\label{sec:fan}

The following concept  plays a key role in this paper.
Let $k,r,\ell_1,\ell_2\in\zN$ with $k,r\geq 8$. Intuitively, the \emph{fan-grid}
is a $(k\times r)$-grid with extra degree-one vertices attached to certain
special (called ``fan'') vertices. Formally, the \emph{fan-grid}
$L(k,r,\ell_1,\ell_2)$ 
is a graph with vertex set $L_1\cup L_2$, where $L_1=\{(i,p)\mid i\in[k],
p\in[r]\}$, $L_2=M_1\cup\dots\cup M_{12}$, where $\vc M{12}$ are disjoint and
$|M_i|=\ell_1$ for $i\in[4]$, and $|M_i|=\ell_2$ for $i\in\{5\zd12\}$. 
Vertices from $L_1$ will be called \emph{grid vertices}.
Vertices $u_1=(1,1)$, $u_2=(1,r)$, $u_3=(k,1)$, $u_4=(k,r)$, $u_5=(1,3)$,
$u_6=(1,r-3)$, $u_7=(k,3)$, $u_8=(k,r-3)$, $u_9=(3,1)$, $u_{10}=(4,r)$,
$u_{11}=(k-2,1)$, $u_{12}=(k-3,r)$ will be called \emph{fan vertices}, and
$u_1,u_2,u_3,u_4$ will be called \emph{corner vertices}. The edges of the fan
grid are as follows: $(i,p)(i',p')$ for $|i-i'|+|p-p'|=1$, and $wu_i$ for each
$w\in M_i$ and $i\in[12]$, see Figure~\ref{fig:L-structure}.

\begin{figure}[t]
\centerline{\includegraphics[totalheight=6cm,keepaspectratio]%
{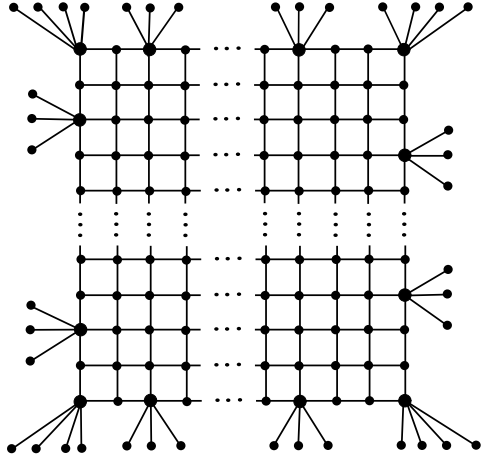}}
\caption{An example of a fan-grid with $\ell_1=4$ and $\ell_2=3$. Fan vertices are shown by larger dots.}\label{fig:L-structure}
\end{figure}

We call a class $\cC$ of relational structures of bounded arity a 
\emph{fan class} if
either $\cC$ has bounded treewidth or 
for any parameters $k,r,\ell_1,\ell_2\in\nat$ we have that $G(\cC)$ contains the
fan-grid $L(k,r,\ell_1,\ell_2)$.

The following is our main result.

\begin{theorem}[\textbf{Main}]\label{thm:main}

Assume FPT $\neq$ W[1] under randomised parameterised reductions. Let $\cC$ be a
recursively enumerable class of relational structures of bounded arity. If
$\cC$ is a fan class then the following are equivalent: 

\begin{enumerate}
\item
$\NCSP(\cC, -)$ is polynomial time solvable.
\item 
  $\NCSP(\cC, -)$ admits an FPRAS. 
\item 
  $p$-$\NCSP(\cC, -)$ admits an FPTRAS. 
\item 
  $\cC$ has bounded treewidth. 
\end{enumerate}
\end{theorem}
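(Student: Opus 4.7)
The implications $4 \Rightarrow 1 \Rightarrow 2 \Rightarrow 3$ are essentially routine: Flum and Grohe's algorithm gives $4 \Rightarrow 1$; an exact polynomial-time counting algorithm trivially yields an FPRAS; and any FPRAS is a fortiori an FPTRAS (take $f$ constant and absorb $\|\A\|$ into the polynomial part of the running time). The whole content of the theorem therefore lies in $3 \Rightarrow 4$: if $\cC$ is a fan class of unbounded treewidth, so that by definition $G(\cC)$ contains every fan-grid $L(k,r,\ell_1,\ell_2)$, then under the stated assumption $p$-$\NCSP(\cC,-)$ admits no FPTRAS. The plan is to exhibit a randomised parameterised AP-reduction from the W[1]-hard problem $p$-\clique to $p$-$\NCSP(\cC,-)$; combined with an FPTRAS this would give a randomised FPT algorithm for $p$-\clique, contradicting FPT $\neq$ W[1] under randomised parameterised reductions.

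Given an instance $(G,k)$ of $p$-\clique, set $r=\binom{k}{2}$ and choose a structure $\A\in\cC$ whose Gaifman graph contains the fan-grid $L(k,r,\ell_1,\ell_2)$ (the ``leaf'' parameters $\ell_1,\ell_2$ will be fixed polynomial in $|G|$). The key construction is a target structure $\B$, over the same signature as $\A$, built from $G$ so that: (i) the $k$ rows of the embedded grid index a choice of $k$ vertices of $G$ and the $r$ columns index an enumeration of the $\binom{k}{2}$ pairs, with the edge relation of the grid forcing adjacent columns to be consistent; and (ii) the $12$ fan anchors $u_1,\ldots,u_{12}$ are forced to land at designated vertices of $\B$. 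Homomorphisms of $\A$ to $\B$ that respect this intended geometry will be, up to a fixed multiplicative constant depending only on $\A$ and the free vertices of $\B$, in bijection with (clique, edge-ordering) pairs of $G$. So the count is at least $c\cdot k!\cdot N(G,k)$, where $N(G,k)$ is the number of $k$-cliques, and at most a similar value plus an error term coming from misaligned homomorphisms.

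The role of the fan vertices is amplification: in $\B$ each intended image of a fan vertex $u_i$ is given degree $N$ (polynomially large), while every other vertex of $\B$ has degree bounded by $N^{1-\delta}$ for some constant $\delta>0$. Because $u_i$ carries $\ell_1$ or $\ell_2$ pendant leaves, any homomorphism that misplaces $u_i$ loses a multiplicative factor of at least $N^{-\delta \ell_i}$ relative to one that places it correctly. Aggregating over all $12$ fan vertices and tuning $\ell_1,\ell_2$ large enough polynomial in $|G|$, one gets that the total contribution of ``misaligned'' homomorphisms is less than, say, one third of the contribution of any single correctly aligned family. Consequently, if $G$ has no $k$-clique then the total homomorphism count is small, and if $G$ has at least one $k$-clique it is bounded below by a quantity at least twice as large. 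An FPTRAS with $\varepsilon=1/3$ then decides $p$-\clique with probability $\geq 3/4$.

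The main obstacle is the rigidity analysis: to show that every homomorphism $\vf:\A\to\B$ either respects the intended row/column structure of the embedded grid or is penalised by the fan amplification. One has to rule out a zoo of degenerate mappings --- mappings that fold, wrap, shift or rotate the grid, collapse several grid rows onto a single row of $\B$, or send a corner to a non-corner anchor. Here the precise placement of the $12$ fan vertices at the four corners and at eight strategic near-corner positions (three steps from each corner along the boundary) is crucial: their configuration is asymmetric enough that no non-trivial reflection, rotation or cyclic shift of the $(k\times r)$-grid can simultaneously line up all $12$ anchors with the $12$ high-degree vertices of $\B$. Establishing this combinatorial rigidity and then checking that the residual ``bad'' homomorphisms, even after the rigidity argument restricts their structure, are quantitatively dominated by the fan amplification is the delicate step, and is where the choice of the fan-grid parameters and the $12$ anchor positions from Section~\ref{sec:fan} is tailored.
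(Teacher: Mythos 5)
Your architecture matches the paper's: a fan-grid source, a target built from $G$ whose rows/columns encode vertices/pairs, and degree amplification at the twelve fan images so that ``aligned'' homomorphisms dominate. Your end-game differs in a legitimate and arguably cleaner way: you reduce from the \emph{decision} problem $p$-\clique{} via a gap (no $k$-clique gives total count below roughly $W_1^{4\ell_1}W_2^{8\ell_2}$, at least one $k$-clique gives at least $2k!\,W_1^{4\ell_1}W_2^{8\ell_2}$, so an FPTRAS with $\ve=1/3$ separates the two cases), whereas the paper builds a genuine parameterised AP-reduction from $p$-\nclique{} (Lemma~\ref{lem:approx}), which forces it to also handle blow-ups ensuring $N$ is large and the divisibility reduction of Lemma~\ref{lem:to-good-k}. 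Your route avoids that machinery at the cost of proving only hardness rather than an AP-reduction between the counting problems; for the statement of Theorem~\ref{thm:main} that is enough.

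The genuine gap is that the entire technical content of the theorem --- the paper's Lemma~\ref{lem:non-c-c} --- is asserted, not proved. You write that ``establishing this combinatorial rigidity \dots is the delicate step'' and stop there, but without it the gap between the two cases is not established. Concretely, three things are missing. First, the domination of non-c-c homomorphisms needs the \emph{two-tier} weighting $W_1\gg W_2$ together with the inequality $\ell_1>8T\ell_2/(T-1)$; your single exponent $N^{1-\delta}$ with one parameter $\delta$ does not explain why corners must be treated separately from the other eight fans. Second, for c-c homomorphisms that are neither identity nor skew identity, the penalty does not come from a fan vertex landing on a low-degree vertex of the grid part ``generically''; it comes from a parity argument (Observation~\ref{obs:par}, requiring $k\equiv 0\pmod 4$ so that $k-1$ and $r-1$ are odd --- a reduction you never perform) which pins down the corner images and then shows that a specific fan vertex such as $(4,r)$ or $(3,1)$ is forced onto a non-fan grid vertex of degree at most $2n+m$. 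Third, your rigidity claim as stated is false: the row reversal $(i,p)\mapsto(k-i+1,p)$ \emph{does} line up all twelve anchors, which is exactly why the paper counts $2Nk!$ good families (identity and skew identity) rather than $Nk!$. This is harmless for your decision gap but shows the asserted ``no non-trivial symmetry survives'' argument cannot be the right one. Finally, a smaller point: you need $L(k,r,\ell_1,\ell_2)$ to \emph{be} the Gaifman graph of some $\A\in\cC$ (as the fan-class definition provides), not merely to be contained in one, and you need the signature-transfer step turning the target graph into a $\tau$-structure $\B$ with the same homomorphism set.
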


Let $\cC$ be a recursively enumerable class of relational structures of bounded
arity and closed under taking minors. We claim that $\cC$ is a fan class and thus
Theorem~\ref{thm:main} applies to such $\cC$. For this we need
Theorem~\ref{thm:excluded}. In particular, for any
$k,r,\ell_1,\ell_2\in\nat$, if $\cC$ is not of bounded treewidth then, by
Theorem~\ref{thm:excluded}, $G(\cC)$ contains an $(s\tm s)$-grid, where
$s=\max(k+2\ell_1,r+2\ell_2)$, and thus also a $(k+2\ell_1)\tm(r+2\ell_2)$-grid.
The following simple lemma
then shows that fan-grids are minors of grids (of appropriate size). 

\begin{lemma}\label{lem:fan-minor}
  $L(k,r,\ell_1,\ell_2)$ is a minor of $(t\tm t')$-grid, where $t=k+2\ell$,
  $t'=r+2\ell$ and $\ell=\max(\ell_1,\ell_2)$.
\end{lemma}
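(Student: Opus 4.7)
The plan is to construct an explicit minor embedding $\vf$ of $L(k,r,\ell_1,\ell_2)$ into the $(t\tm t')$-grid $G$, via the branch-set reformulation of the minor relation: assign each vertex of $L(k,r,\ell_1,\ell_2)$ to a connected \emph{branch set} of $G$ so that distinct branch sets are pairwise disjoint and each edge of $L(k,r,\ell_1,\ell_2)$ is witnessed by some $G$-edge with endpoints in the two corresponding branch sets. Additional $G$-edges between distinct branch sets are harmless, as the minor relation permits edge deletion.

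I would place the inner $(k\tm r)$-grid of the fan-grid into the central block of $G$ at positions $\{(\ell+i,\ell+p):i\in[k],\, p\in[r]\}$, leaving a border of width $\ell$ on each of the four sides. For every non-fan grid vertex $(i,p)\in L_1$, set $\vf((i,p))=\{(\ell+i,\ell+p)\}$. For each fan vertex $u_j=(i,p)$, I enlarge its branch set to a \emph{fan arm}: a straight path of $\ell+1$ cells in $G$ starting at $(\ell+i,\ell+p)$ and extending into the nearest border. The arms of the corner vertices $u_1,u_2,u_3,u_4$ and of the left/right fan vertices $u_9,u_{10},u_{11},u_{12}$ extend horizontally (left or right), while the arms of the top/bottom fan vertices $u_5,u_6,u_7,u_8$ extend vertically (up or down). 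Finally, I map each pendant $w\in M_j$ to a single border cell along a designated strip of $\ell$ cells parallel to and immediately adjacent to $\vf(u_j)$; this fits because $|M_j|\in\{\ell_1,\ell_2\}$ and $\ell=\max(\ell_1,\ell_2)$.

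The main obstacle is essentially bookkeeping: verifying that the 12 fan arms and their pendant strips fit inside the border region without overlap. The hypotheses $k,r\ge 8$ are used here to keep the fan vertices well separated. For example, in the top-left quadrant the arms of $u_1,u_5,u_9$ lie respectively in row $\ell+1$, column $\ell+3$, and row $\ell+3$, which are pairwise disjoint; the condition $r\ge 8$ then keeps the vertical arms of $u_5$ (column $\ell+3$) and $u_6$ (column $\ell+r-3$) several columns apart, even after attaching pendant strips on one side, and the analogous reasoning goes through near the other three corners. With disjointness in hand, the required edges of $L(k,r,\ell_1,\ell_2)$ are realised in $G$: edges between two non-fan inner cells are already adjacencies of single-cell branch sets in the central block; edges incident to a fan vertex are witnessed at its original cell $(\ell+i,\ell+p)$, which is always included in its fan arm; and every pendant--fan-vertex edge holds because the pendant's cell is placed adjacent to the fan arm. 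Any remaining $G$-edges that cross distinct branch sets are simply deleted when forming the minor.
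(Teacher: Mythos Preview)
Your proposal is correct and takes essentially the same approach as the paper: the paper's proof is a one-line reference to a figure in which the central $k\times r$ block hosts the grid part, paths (your ``fan arms'') reaching into the width-$\ell$ border are contracted to form the fan vertices, and the pendants sit as single cells alongside these paths. Your write-up simply spells out in coordinates and branch-set language what the figure depicts, including the use of $k,r\ge 8$ to keep the arms and pendant strips disjoint.
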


\begin{proof}
Take the subgraph of the $(t\tm t')$-grid as shown in Figure~\ref{fig:fan-minor}
and contract the paths shown with thicker edges.
\end{proof}

\begin{figure}[thb]
\centerline{\includegraphics[totalheight=6cm,keepaspectratio]%
{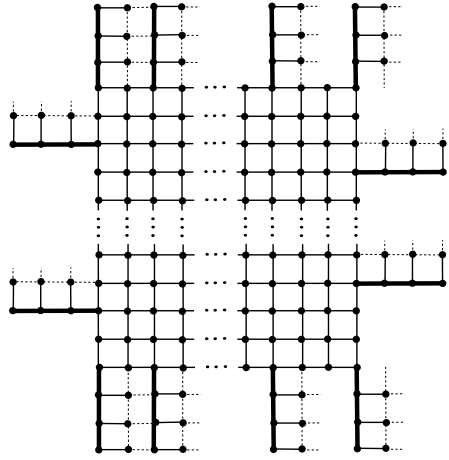}}
\caption{Fan-grid as a minor. In the subgraph of the bigger grid 
shown in solid lines contract the thick edges.}\label{fig:fan-minor}
\end{figure}

\section{Proof of Theorem~\ref{thm:main}}
\label{sec:reduction}

Conditions (1) and (4) in Theorem~\ref{thm:main} are equivalent by
Theorem~\ref{thm:exact}. Implications ``(1) $\Rightarrow$ (2) $\Rightarrow$
(3)'' are trivial. Our main contribution is to prove the ``(3) $\Rightarrow$
(4)'' implication. 

The main idea of the proof is as follows. Assuming that $\NCSP(\cC, -)$ admits
an FPTRAS for a fan class $\cC$, we will demonstrate that $\cC$ has bounded
treewidth. For the sake of contradiction, assume that $\cC$ has unbounded
treewidth. We will exhibit a parameterised reduction from $p$-{\nclique} to
$p$-$\NCSP(\cC, -)$, which gives an FPTRAS for $p$-{\nclique} assuming an FPTRAS
for $p$-$\NCSP(\cC, -)$. Given a graph $G$ and an integer $k$, our reduction
builds (in Section~\ref{sec:construction}) a graph $H=H(G,k,W_1,W_2)$ in such a
way that the number of homomorphisms from a fan-grid $L=L(k,r,\ell_1,\ell_2)$
(defined in Section~\ref{sec:fan}) to $H$ approximates the number of $k$-cliques
in $G$. Section~\ref{subsec:weights} gives details on the number of possible
homomorphisms from $H$ to $L$. (The numbers $\ell_1$, $\ell_2$, $W_1$, and $W_2$
are carefully chosen to make the reduction work.) Section~\ref{subsec:pieces}
then fits the pieces together and describes the reduction in detail.

\subsection{Construction}
\label{sec:construction}

Let $G=(V,E)$ be a graph with $n=|V|$ and $m=|E|$. Let $k\in\zN$. 
We construct a graph $H(G,k,W_1,W_2)$ for
$W_1,W_2>2(n+m)$ as follows.\footnote{A similar if slightly simpler construction
is described and illustrated in~\cite[Section~4.1.1]{Meeks16:dam}.}
Let $r={k\choose2}$ and let $\vr$ be a correspondence 
between $[r]$ and the set of 2-element sets 
$\{\{i,j\}\mid i,j\in[k], i\ne j\}$. For $i\in[k]$ and $p\in[r]$, we write
$i\in p$ as a shorthand for $i\in\vr(p)$. The vertex set of $H(G,k,W_1,W_2)$ is the union of two sets
$H_1\cup H_2$, defined by
\begin{align*}
H_1 &= \{(v,e,i,p)\mid v\in V, e\in E, \text{ and } v\in e\iff i\in p\},\\
H_2 &= K_1\cup\dots\cup K_{12},
\end{align*}
where $\vc K{12}$ are disjoint and 
$|K_i|=W_1$ for $i\in[4]$, $|K_i|=W_2$ for $i\in\{5\zd12\}$.

As in fan-grids, vertices of the form $(v,e,1,1)$, $(v,e,1,r)$, $(v,e,k,1)$,
$(v,e,k,r)$, $(v,e,1,3)$, $(v,e,1,r-3)$, $(v,e,k,3)$, $(v,e,k,r-3)$,
$(v,e,3,1)$, $(v,e,4,r)$, $(v,e,k-2,1)$, $(v,e,k-3,r)$ will be called \emph{fan
vertices}, and vertices of the form $(v,e,1,1)$, $(v,e,1,r)$, $(v,e,k,1)$,
$(v,e,k,r)$ will be called \emph{corner vertices}.

The edge set of $H(G,k,W_1,W_2)$ consists of the following pairs:
\begin{itemize}
  \item $(v,e,i,p)(v',e,i',p)$ such that $|i-i'|=1$;
  \item $(v,e,i,p)(v,e',i,p')$ such that $|p-p'|=1$;
  \item $u(v,e,1,1)$ for $u\in S_1\sse K_1$ and $(v,e,1,1)\in H_1$, where $S_1$
    is an arbitrary subset of $K_1$ whose cardinality is such that the degree of
    $(v,e,1,1)$ is exactly $W_1$. (Here we used that $W_1>2(n+m)$, as
    $(v,e,1,1)$ can have at most $n+m$ neighbours outside of $K_1$.)\\
Similarly, $u(v,e,1,r)$, $u(v,e,k,1)$, $u(v,e,k,r)$, $u(v,e,1,3)$, 
    $u(v,e,1,r-3)$, $u(v,e,k,3)$, $u(v,e,k,r-3)$, $u(v,e,3,1)$, $u(v,e,4,r)$, 
    $u(v,e,k-2,1)$, $u(v,e,k-3,r)$ for $u\in S_j\sse K_j$ (for
    $j=2\zd 12$ in this order) 
    and $(v,e,1,r)$, $(v,e,k,1)$, $(v,e,k,r)$, $(v,e,1,3)$, $(v,e,1,r-3)$,
    $(v,e,k,3)$, $(v,e,k,r-3)$, $(v,e,3,1)$, $(v,e,4,r)$, $(v,e,k-2,1)$,
    $(v,e,k-3,r)\in H_1$, where $S_2\zd S_{12}$ are arbitrary subsets whose cardinality is such that the degree of $(v,e,1,r),(v,e,k,1),(v,e,k,r)$ is exactly $W_1$ and the degree of the remaining vertices from the list is exactly $W_2$.
\end{itemize}

We study homomorphisms from $L(k,r,\ell_1,\ell_2)$ to $H(G,k,W_1,W_2)$. 
A homomorphism $\vf:L(k,r,\ell_1,\ell_2)\to
H(G,k,W_1,W_2)$ is said to be corner-to-corner (or c-c for short) if
\[
  \vf(1,1), \vf(1,r), \vf(k,1), \vf(k,r)\in \{(v,e,1,1), (v,e,1,r),(v,e,k,1),(v,e,k,r)\mid v\in V, e\in E\}.
\]
A homomorphism $\vf$ is
called identity (skew identity) if $\vf(i,p)\in\{(v,e,i,p)\mid v\in V,e\in E\}$
(respectively, $\vf(i,p)\in\{(v,e,k-i+1,p)\mid v\in V,e\in E\}$) for all
$i\in[k]$ and $p\in[r]$. Sometimes we will abuse the terminology and call 
a (skew) identity homomorphism the restriction of such homomorphism to
$L_1$ (the set of grid vertices).

We define the \emph{weight} of a homomorphism $\vf$ from 
$L(k,r,\ell_1,\ell_2)$ restricted to $L_1$ (the set of grid vertices) to 
$H(G,k,W_1,W_2)$ as the number of extensions of $\vf$ to
a homomorphism from $L(k,r,\ell_1,\ell_2)$. 

\subsection{Weights of Homomorphisms}
\label{subsec:weights}

We start with a simple lemma.

\begin{lemma}\label{lem:c-c-number}
The weight of an identity or skew identity homomorphism is
$W_1^{4\ell_1}W_2^{8\ell_2}$. 
\end{lemma}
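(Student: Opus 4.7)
The plan is to exploit the fact that once the homomorphism $\varphi$ is fixed on the grid vertices $L_1$, the remaining vertices $L_2 = M_1 \cup \cdots \cup M_{12}$ come in twelve pairwise-disjoint ``bunches'', and in the fan-grid each $w \in M_i$ has a unique neighbour, namely $u_i$. Hence the number of extensions factorises as $\prod_{i=1}^{12}\,(\deg_H \varphi(u_i))^{|M_i|}$, since the images of vertices within a single $M_i$ can be chosen independently (no edges inside $M_i$) and images of vertices in different $M_i$'s are unrelated.

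First I would identify, for an identity homomorphism $\varphi(i,p)=(v,e,i,p)$, the images of the twelve fan vertices: they are precisely twelve of the ``fan vertices'' of $H_1$ listed in the definition of $H(G,k,W_1,W_2)$. By construction, each of the four corner vertices of $H_1$ has total degree exactly $W_1$ (the subset $S_j \subseteq K_j$ for $j\in[4]$ was chosen to make this so, which is possible because $W_1>2(n+m)$ while the number of neighbours inside $H_1$ is at most $n+m$), and each of the other eight fan vertices of $H_1$ has total degree exactly $W_2$ by the analogous choice of $S_5,\ldots,S_{12}$.

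Combining, the weight of the identity homomorphism equals
\[
\prod_{i=1}^{4} W_1^{\ell_1} \cdot \prod_{i=5}^{12} W_2^{\ell_2} \;=\; W_1^{4\ell_1}\, W_2^{8\ell_2}.
\]
For the skew identity $\varphi(i,p)=(v,e,k-i+1,p)$, the same argument applies once one checks that the twelve fan vertices of $L$ are still sent to fan vertices of $H_1$ of the matching ``type''. A brief case-check suffices: the four corner vertices $u_1,u_2,u_3,u_4$ are permuted among the four corners of $H_1$, and the eight non-corner fan vertices are permuted among the eight non-corner fan vertices (e.g.\ $u_5=(1,3)\mapsto(v,e,k,3)$, $u_9=(3,1)\mapsto(v,e,k-2,1)$, and so on). Hence each degree $\deg_H \varphi(u_i)$ is unchanged and the same product $W_1^{4\ell_1}W_2^{8\ell_2}$ is obtained.

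There is no real obstacle; the only subtle point is confirming that the skew identity preserves the partition of fan vertices into corner and non-corner types, which is immediate from the explicit list of the $u_i$'s and the symmetry of their positions under the reflection $(i,p)\mapsto(k-i+1,p)$.
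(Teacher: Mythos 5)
Your proof is correct and follows essentially the same route as the paper: the extensions to $L_2$ factorise independently over the neighbours of the fan-vertex images, and the degrees of the corner and non-corner fan vertices of $H_1$ are $W_1$ and $W_2$ respectively by construction. Your explicit check that the skew identity permutes corner fan vertices among corners and non-corner fan vertices among non-corners is a detail the paper leaves implicit, but it is the same argument.
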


\begin{proof}
Let $\vf$ be an identity or a skew identity homomorphism. 
The images of vertices from $L_1$ under $\vf$ are fixed,
while vertices from $L_2$ can be mapped by $\vf$
to any neighbour of the corresponding fan vertex independently. 
Since the degree of a corner vertex $(v,e,i,p)$ with $i\in\{1,k\}$
and $p\in\{1,r\}$ is $W_1$, and the degree of any other fan vertex 
is $W_2$, the result follows.
\end{proof}

The next lemma, which will be proved using Lemma~\ref{lem:c-c-number}, is essentially~\cite[Lemma~3.1]{Dalmau04:side} adapted to our setting, which in turn builds on~\cite[Lemma~4.4]{Grohe07:jacm}.

\begin{lemma}\label{lem:c-c-factorial}
Let $N$ be the number of $k$-cliques in $G$. Then the total weight of identity and skew identity homomorphisms is $2NW_1^{4\ell_1}W_2^{8\ell_2}k!$. 
\end{lemma}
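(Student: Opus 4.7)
The plan is to analyse identity homomorphisms from $L_1$ to $H(G,k,W_1,W_2)$, show that these are in bijection with ordered $k$-cliques of $G$, and then multiply by the per-homomorphism weight from Lemma~\ref{lem:c-c-number}. The skew identity case is symmetric.

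First I would examine what an identity homomorphism $\vf$ restricted to $L_1$ must look like. Such a $\vf$ sends $(i,p)$ to some $(v_{i,p},e_{i,p},i,p)\in H_1$. The horizontal edges of $L_1$, $(i,p)(i,p+1)$, must map to horizontal edges of $H$, which by construction only connect vertices that share the first coordinate $v$; hence $v_{i,p}=v_{i,p+1}$. So along each row $i$ there is a single vertex $v_i\in V$. Symmetrically, the vertical edges force $e_{i,p}=e_{i+1,p}$, so along each column $p$ there is a single edge $e_p\in E$. Finally, the existence condition $v\in e\iff i\in p$ in the definition of $H_1$ translates into: for every $i\in[k]$ and $p\in[r]$, $v_i$ is an endpoint of $e_p$ iff $i$ belongs to the pair $\vr(p)$.

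Next I would argue this last condition is equivalent to $(v_1,\ldots,v_k)$ being an ordered $k$-clique. Writing $\vr(p)=\{a,b\}$: the condition forces both $v_a$ and $v_b$ to be endpoints of $e_p$, and forces $v_i\notin\{v_a,v_b\}$ for $i\ne a,b$. Varying $p$ over all pairs gives that $v_1,\ldots,v_k$ are pairwise distinct, and that $e_p=\{v_a,v_b\}\in E$ for every $\{a,b\}\subseteq [k]$; hence $\{v_1,\ldots,v_k\}$ is a $k$-clique in $G$. Conversely, any ordered $k$-tuple of vertices forming a clique determines the $v_i$'s and, via $e_p:=\{v_a,v_b\}$ when $\vr(p)=\{a,b\}$, a unique admissible choice of $e_p$'s, hence a unique identity homomorphism on $L_1$. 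This yields exactly $N\cdot k!$ identity homomorphisms restricted to $L_1$.

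The skew identity case runs verbatim after relabelling rows by $i\mapsto k-i+1$: the edge conditions are the same since $|i-i'|=1\iff|(k-i+1)-(k-i'+1)|=1$, and the existence condition is unchanged as the second and fourth coordinates of the target vertex are still indexed by $i$ and $p$, giving another $N\cdot k!$ maps on $L_1$. For $k\ge 8$ the identity and skew identity images lie in disjoint rows of $H$, so the two sets of homomorphisms are disjoint; in particular no double counting occurs. Applying Lemma~\ref{lem:c-c-number}, each of these $2Nk!$ homomorphisms on $L_1$ extends to exactly $W_1^{4\ell_1}W_2^{8\ell_2}$ homomorphisms of $L(k,r,\ell_1,\ell_2)$, so the total weight is $2Nk!\cdot W_1^{4\ell_1}W_2^{8\ell_2}$. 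I expect the main subtlety to be the clean derivation, from just the edge-adjacency constraints in $H$ and the definition of $H_1$, that the $v_i$'s must be pairwise distinct and that $e_p$ must have precisely the endpoints indexed by $\vr(p)$; everything else is bookkeeping.
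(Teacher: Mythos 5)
Your proposal is correct and follows essentially the same route as the paper's proof: use the edge structure of $H(G,k,W_1,W_2)$ to show the first coordinate is constant along rows and the second along columns, deduce from the membership condition of $H_1$ that the row-vertices form an ordered $k$-clique (giving $Nk!$ identity maps on $L_1$), handle the skew identity case by the symmetry $i\mapsto k-i+1$, and multiply by the extension count $W_1^{4\ell_1}W_2^{8\ell_2}$ from Lemma~\ref{lem:c-c-number}. Your explicit derivation of pairwise distinctness of the $v_i$'s and the disjointness of the identity and skew identity families are details the paper leaves implicit, but the argument is the same.
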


\begin{proof}
We will show that the total weight of identity homomorphisms 
is $NW_1^{4\ell_1}W_2^{8\ell_2}k!$. Observe that there is a bijection 
between the sets of identity and skew identity homomorphisms that 
maps an identity homomorphism $\vf$ to a skew identity one, $\psi$,
for which $\psi(i,p)=(v,e,k-1+1,p)$ whenever $\vf(i,p)=(v,e,i,p)$. 
Therefore the total weight of skew identity homomorphisms is also 
$NW_1^{4\ell_1}W_2^{8\ell_2}k!$. 
First we give a description of all identity homomorphisms.
Let $\vc vk$ be the vertex set of a $k$-clique in $G$. 
For $p\in[r]$ with   $\vr(p)=\{a,b\}$, let $e_p=v_av_b$ be the edge in $G$ 
between $v_a$ and $v_b$. We define $\vf_{\vc vk}:L_1\to H_1$ by
\[
\vf_{\vc vk}((i,p))=(v_i,e_p,i,p)
\]
for every $i\in[k]$ and $p\in[r]$. 
  
We will need two claims; the first one follows directly from the definition.

\medskip

\noindent {\bf Claim 1.} 
$\vf_{\vc vk}$ is an identity homomorphism from $L_1$ to $H_1$.

\medskip

\noindent {\bf Claim 2.}
If $\vf$ is an identity homomorphism from $L_1$ to $H_1$ then
$\vf=\vf_{\vc vk}$ for some vertex set $\vc vk$ of a $k$-clique in $G$. 

\medskip

\noindent {\bf Proof of Claim 2.}
Let $\vf$ be an identity homomorphism from $L_1$ to $H_1$. 
  
For every $i\in[k]$ and $p\in[r]$, we have $\vf((i,p))=(v,e,i,p)$ for
some $v\in V$ and $e\in E$ with $v\in e \iff i\in p$. Let
$\vf((i',p))=(v',e',i',p)$. We claim that (A) $e=e'$. We prove (A) for $i'=i+1$,
the rest follows by induction. Since $\vf$ is a homomorphism and $(i,p)(i',p)$
is an edge in $L(k,r,\ell_1,\ell_2)$, there is an edge $(v,e,i,p)(v',e',i',p)$
in $H(G,k,W_1,W_2)$. The definition of edges in $H(G,k,W_1,W_2)$ implies that
$e=e'$. Similarly, let $\vf((i,p))=(v,e,i,p)$ and $\vf((i,p'))=(v',e',i,p')$. We claim
that (B) $v=v'$. For $p'=p+1$, this again follows from the assumption that $\vf$ is a homomorphism
and the definition of edges in $H(G,K,W_1,W_2)$; a simple induction establishes (B) for
arbitrary values $p,p'\in[r]$.

Together, claims (A) and (B) imply that there are vertices $\vc vk\in V$ and
edges $\vc er\in E$ such that for all $i\in[k]$ and $p\in[r]$ we have
$\vf((i,p))=(v_i,e_p,i,p)$. Since $\vf((i,p))\in H_1$, we have $v_i\in e_p \iff
i\in p$. Hence $\vc vk$ forms a $k$-clique in $G$. 
{\bf (End of proof of Claim 2.)}

\medskip

Claims~1 and 2 give us a complete description of identity homomorphisms from
$L_1$ to $H_1$: a mapping $\vf$ from $L_1$ to $H_1$ is an identity homomorphisms
if and only if $\vf=\vf_{\vc vk}$ for some vertex set $\vc vk$ of a $k$-clique in
$G$. Hence, the number of such mappings is the number of $k$-cliques in $G$
multiplied by $k!$. By Lemma~\ref{lem:c-c-number}, each identity homomorphism
can be extended in $W_1^{4\ell_1}W_2^{8\ell_2}$ distinct ways to a
homomorphism from $L(k,r,\ell_1,\ell_2) $ to $H(G,k,W_1,W_2)$. 
\end{proof}

We will frequently use the following simple observation. 

\begin{observation}\label{obs:par}
Let $\vf$ be a
homomorphism from a bipartite graph $G$ to a bipartite graph $H$. If vertices $u,v$ are of
distance $m$ in $G$ then $\vf(u),\vf(v)$ are of distance at most $m$ in $H$ and
the parity of the distances is the same.
\end{observation}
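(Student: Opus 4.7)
The plan is to prove the two claims separately, using two standard facts: homomorphisms send walks to walks, and closed walks in bipartite graphs have even length.

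First I would handle the distance bound. Since $u$ and $v$ are at distance $m$ in $G$, there is a path $u=w_0,w_1,\ldots,w_m=v$ of length $m$. Because $\vf$ is a homomorphism, each $\vf(w_i)\vf(w_{i+1})$ is an edge of $H$ (with the convention that a loop would count, but bipartiteness rules this out since $w_i$ and $w_{i+1}$ lie in opposite parts of $G$; in any case we obtain a walk in $H$). So $\vf(w_0),\ldots,\vf(w_m)$ is a walk of length $m$ from $\vf(u)$ to $\vf(v)$ in $H$, and therefore $d_H(\vf(u),\vf(v))\leq m$.

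For the parity statement I would use the standard closed-walk argument. Let $m'=d_H(\vf(u),\vf(v))$ and take a shortest path from $\vf(v)$ back to $\vf(u)$ in $H$. Concatenating the image walk of length $m$ with this shortest path yields a closed walk in $H$ of length $m+m'$. Since $H$ is bipartite, every closed walk has even length, so $m+m'$ is even and thus $m\equiv m'\pmod 2$.

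There is no real obstacle here; the only thing to be a little careful about is the first step, where we want a genuine walk in $H$ (not a degenerate sequence). This is guaranteed by bipartiteness of $G$: consecutive vertices on the path lie in opposite parts, so $\vf(w_i)\neq\vf(w_{i+1})$ is forced by the fact that their images form an edge in $H$, which itself is bipartite and hence loopless. The whole argument is two or three lines once phrased this way.
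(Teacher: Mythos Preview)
Your argument is correct. The paper states this as a simple observation without proof, so there is nothing to compare against; the walk-image bound together with the closed-walk parity argument you give is exactly the standard justification.
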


Next we establish an upper bound on the total weight of homomorphisms that 
are neither identity nor skew identity.

\begin{lemma}\label{lem:non-c-c}
Let $G=(V,E)$ have $n=|V|$ vertices and $m=|E|$ edges, let $k=4k'$ for 
some $k'$, and let $T=\log_{W_2}W_1$. If 
\[
\ell_1>\frac{8T\ell_2}{T-1},
\]
then the total weight of
homomorphisms that are neither identity nor skew identity is at most
\[
W_1^{4\ell_1}W_2^{6\ell_2}(2n+m)^{2\ell_2}
\cdot (4W_1+8W_2+nmkr)^{kr}.
\]
\end{lemma}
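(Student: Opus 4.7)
The plan is to bound the total weight of non-(skew)-identity homomorphisms as the product \emph{(uniform upper bound on the weight of one such $\vf$)} $\times$ \emph{(crude count of how many such $\vf$ exist)}. For the count I would use the trivial bound that a homomorphism $L_1\to H$ is in particular a map of sets, so there are at most $|V(H)|^{|L_1|}$ of them; since $|L_1|=kr$ and $|V(H)|\le|H_1|+|H_2|\le nmkr+4W_1+8W_2$ (the first summand bounding the number of $(v,e,i,p)$-tuples in $H_1$ and the second summing $|K_1|,\ldots,|K_{12}|$), this yields the $(4W_1+8W_2+nmkr)^{kr}$ factor in the stated bound.

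The heart of the argument is the per-homomorphism weight bound. By construction, the weight of $\vf$ equals $\prod_{i=1}^{12}\deg_H(\vf(u_i))^{|M_i|}$, and the vertex degrees in $H=H(G,k,W_1,W_2)$ split into three tiers: corner fan vertices have degree exactly $W_1$, other fan vertices have degree exactly $W_2$, and every other vertex (a non-fan $H_1$-vertex, or a vertex of some $K_j$) has degree at most $2n+m$, by a routine count of neighbors. I would then split non-(skew)-identity $\vf$ into two cases. In \emph{Case~A}, $\vf$ is not corner-to-corner, so some $u_i$ with $i\in[4]$ maps to a vertex of degree at most $W_2$; this degrades one $W_1^{\ell_1}$ factor to $W_2^{\ell_1}$ and bounds the weight by $W_1^{3\ell_1}W_2^{\ell_1+8\ell_2}$. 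A direct logarithmic comparison using $T=\log_{W_2}W_1$ and the hypothesis $\ell_1>8T\ell_2/(T-1)$ then shows this is at most the target $W_1^{4\ell_1}W_2^{6\ell_2}(2n+m)^{2\ell_2}$.

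In \emph{Case~B}, $\vf$ is corner-to-corner but neither identity nor skew identity. Here I would invoke Observation~\ref{obs:par} applied to the distances among the four corners of $L$---and use that $k=4k'$ so $k$ and $r=\binom{k}{2}$ have fixed parity---to conclude that the four corner images of $\vf$ realize one of exactly four symmetry types of the corner placement in $H$: identity, skew identity, horizontal reflection, or $180^\circ$ rotation. The rigidity argument already implicit in the proof of Lemma~\ref{lem:c-c-factorial}---Observation~\ref{obs:par} applied simultaneously from all four corners pointwise at each grid vertex---shows that an identity-type (resp.\ skew-identity-type) corner assignment forces $\vf$ itself to be identity (resp.\ skew identity), leaving only horizontal reflection and $180^\circ$ rotation available in Case~B. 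The same pointwise distance/parity argument pins the position of $\vf(u_j)$ in $H$ for each $j\in\{5,\ldots,12\}$; and the deliberate left/right asymmetry built into the fan-grid---left-edge fans at grid-distance~$2$ from their corners versus right-edge fans at grid-distance~$3$, and analogously for top/bottom---is designed so that under either wrong symmetry every such pinned position falls outside the twelve fan positions of $H$. Hence each $\vf(u_j)$, $j\ge 5$, lands on a non-fan vertex of degree $\le 2n+m$, giving weight at most $W_1^{4\ell_1}(2n+m)^{8\ell_2}$, which is below the target since $W_2>2n+m$.

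Multiplying the per-homomorphism bound $W_1^{4\ell_1}W_2^{6\ell_2}(2n+m)^{2\ell_2}$ by the count $(4W_1+8W_2+nmkr)^{kr}$ then yields the lemma. The hard part will be the asymmetry argument in Case~B: one must verify that combining the distance-and-parity constraints from all four corners indeed uniquely pins the image position of each $u_j$, $j\in\{5,\ldots,12\}$, and that for both ``wrong'' corner placements these pinned positions genuinely avoid the twelve fan positions of $H$. This is a finite case analysis relying on $k\ge 8$ (and hence $r\ge 28$) to rule out accidental coincidences between the pinned positions $(i,r-p+1)$ (horizontal reflection) or $(k-i+1,r-p+1)$ ($180^\circ$ rotation) and the fan positions of $H$.
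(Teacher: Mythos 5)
Your overall architecture matches the paper's exactly (per-homomorphism weight bound times the crude count $(4W_1+8W_2+nmkr)^{kr}$, with the non-c-c case handled by the logarithmic comparison that the hypothesis on $\ell_1$ is calibrated for), but there is a genuine gap in your Case~B. Observation~\ref{obs:par} only bounds distances from above and fixes parities; it does \emph{not} force the corner placement to be injective, so the corner images need not realize one of your four ``symmetry types.'' Concretely, the placement $(1,1)\mapsto(\cdot,\cdot,1,1)$, $(k,1)\mapsto(\cdot,\cdot,k,1)$, $(1,r)\mapsto(\cdot,\cdot,k,1)$, $(k,r)\mapsto(\cdot,\cdot,1,1)$ --- the grid folding its last column back onto its first --- is consistent with every distance/parity constraint (it uses $k-1\le r-1$, both odd), and this folding case is precisely where the bulk of the paper's proof lives (its Cases~1.2 and the latter subcase of Case~2). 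For a folded homomorphism your conclusion that all eight side-fans land on non-fan vertices is unavailable: only the two extreme columns are rigid (being unique shortest corner-to-corner paths), so only the images of $(4,r)$ and $(k-3,r)$ are pinned to non-fan positions, while $u_5,\dots,u_9,u_{11}$ are constrained merely up to distance and parity. This is exactly why the lemma's bound has the shape $W_2^{6\ell_2}(2n+m)^{2\ell_2}$ rather than the $(2n+m)^{8\ell_2}$ you derive; your target weight $W_1^{4\ell_1}(2n+m)^{8\ell_2}$ for all of Case~B cannot be achieved. (Your treatment of the genuine reflection/rotation placements via the deliberate distance-$2$/distance-$3$ asymmetry of the fan positions is sound, and for those placements the eight side-fans do avoid fan positions; the missing content is the non-injective placements.)

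A second, smaller slip: in Case~A your per-homomorphism bound $W_1^{3\ell_1}W_2^{\ell_1+8\ell_2}$ assumes the eight non-corner fan vertices of $L$ map to vertices of degree at most $W_2$, but nothing prevents them from mapping to corner vertices of $H$, which have degree $W_1$. The correct crude bound is $W_1^{3\ell_1+8\ell_2}W_2^{\ell_1}$, and it is this bound that the hypothesis $\ell_1>8T\ell_2/(T-1)$ is tuned to beat (your weaker inequality would only need $\ell_2\le(T-1)\ell_1$). The fix is routine, but the same oversight --- forgetting that non-corner fan vertices of $L$ may hit corner vertices of $H$ --- is what makes the folding analysis in Case~B delicate, so the degree bookkeeping needs to be done with care throughout.
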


The key ideas in the proof of Lemma~\ref{lem:non-c-c} are the following: Firstly, we show
that c-c homomorphisms dominate non-c-c homomorphisms. Secondly, using crucially
the special structure of fan grids and our choice of $k$ being a multiple of
four, we establish an upper bound on any c-c
homomorphism that is neither identity nor skew identity. Finally, we give an
upper bound on the number of all homomorphisms. These three ingredients together
allows us to establish the required bound.

\begin{proof}
We prove this lemma in two steps. First, in Claims 1 and 2, we upper bound the
weight of a homomorphism that is not identity or skew identity. Second, in
Claim 3, we upper bound the number of such homomorphisms. 

\medskip

\noindent
{\bf Claim 1.}
The weight of any c-c homomorphism is greater than the weight of any 
non c-c homomorphism.

\medskip

\noindent
{\bf Proof of Claim 1:}
The weight of a c-c homomorphism $\vf$ is lower bounded by $W^{4\ell_1}_1$,
since each of the $4\ell_1$ neighbours of a corner vertex in
$L(k,r,\ell_1,\ell_2)$, say $u$, can be mapped to any of the $W_1$ neighbours
of the corner vertex $\vf(u)$ in $H(G,k,W_1,W_2)$.

The weight of any non c-c homomorphism is upper bounded by
$W_1^{3\ell_1+8\ell_2}W_2^{\ell_1}$, since in a non c-c homomorphism $\vf$ at
least one corner vertex in $L(k,r,\ell_1,\ell_2)$, say $u$, is mapped to a fan
vertex $\vf(u)$ that is not a corner vertex and hence the $\ell_1$ neighbours
of $u$ can only be mapped to the $W_2$ neighbours of $\vf(u)$ in
$H(G,k,W_1,W_2)$. The term $W_1^{3\ell_1+8\ell_2}$ corresponds to all but one
fan vertices in $L(k,r,\ell_1,\ell_2)$ being mapped to corner vertices in
$H(G,k,W_1,W_2)$.

Take the logarithm base $W_2$
of the two numbers above. We need to show that
\[
4T\ell_1> T(3\ell_1+8\ell_2)+\ell_1,
\]
or, equivalently,
\[
T\ell_1>8T\ell_2+\ell_1,
\]
which is equivalent to the condition
\[
\ell_1>\frac{8T\ell_2}{T-1}
\]
of the lemma.
{\bf (End of proof of Claim 1.)}

\medskip

\noindent{\bf Claim 2.}
Let $\vf$ be a c-c homomorphism that is neither identity nor skew 
identity. Then its weight does not exceed 
$W_1^{4\ell_1}W_2^{6\ell_2}(2n+m)^{2\ell_2}$.

\medskip

\noindent
{\bf Proof of Claim 2:}
We consider several cases. First observe some symmetries in c-c homomorphisms.
If $\psi$ is the mapping of $H_1$ (the ``grid'' part of $H(G,k,W_1,W_2)$) 
mapping $(v,e,i,p)$ to $(v,e,k-i+1,p)$, then the weight of
$\psi\circ\vf$ equals that of $\vf$. Thus we may assume
$\vf(1,1)\in\{(v,e,1,1),(v,e,1,r)\mid v\in V,e\in E\}$, which gives {\sc Case
1} and {\sc Case 2} below, respectively.
Note that by the assumption that $k$ is a multiple of four, both $k-1$ and
$r-1$, where $r={k\choose 2}$, are odd.

\medskip
  
\noindent {\sc Case 1.} $\vf(1,1)=(v,e,1,1)$ for some $v\in V,e\in E$.

\medskip

Since $(k,1)$ is at distance $k-1$ from $(1,1)$, by Observation~\ref{obs:par},
$\vf(k,1)$ is at odd distance not exceeding $k-1$ from $\vf(1,1)$. As $\vf$ is
c-c, there is only one possibility $\vf(k,1)=(v',e',k,1)$ for some $v'\in V,
e'\in E$. Similarly, as $(1,r)$ is at odd distance from $(1,1)$ and $\vf(1,r)$
is a corner vertex, by Observation~\ref{obs:par} it suffices to consider only two cases for $\vf(1,r)$.

\medskip

\noindent {\sc Case 1.1.} $\vf(1,r)=(v'',e'',1,r)$  for some $v''\in V, e''\in E$.

\medskip

Since $(k,r)$ is at distance $k-1$ from $(1,r)$, by Observation~\ref{obs:par},
$\vf(k,r)$ is at odd distance not exceeding $k-1$ from $\vf(1,r)$. As $\vf$ is
c-c and we assume that $\vf(1,r)=(v'',e'',1,r)$, there is only one possibility $\vf(k,r)=(v''',e''',k,r)$ for some $v'''\in
V, e'''\in E$. It is now easy to verify that $\vf$ is identity, a contradiction.

\medskip

\noindent {\sc Case 1.2.} $\vf(1,r)=(v'',e'',k,1)$  for some $v''\in V, e''\in E$.

\medskip

As in {\sc Case 1.1}, $\vf(k,r)=(v''',e''',1,1)$ for some $v'''\in V, e'''\in
E$. In detail, since $(k,r)$ is at distance $k-1$ from $(1,r)$, by
Observation~\ref{obs:par}, $\vf(k,r)$ is at odd distance not exceeding $k-1$
from $\vf(1,r)$. As $\vf$ is c-c and we assume that $\vf(1,r)=(v'',e'',k,1)$,
there is only one possibility $\vf(k,r)=(v''',e''',1,1)$.

Since $(1,r),(2,r)\zd(k,r)$ is
the only shortest path from $(1,r)$ to $(k,r)$, homomorphism 
$\vf$ maps this path to $(v_k,e_k,k,1),(v_{k-1},e_{k-1},k-1,1)
\zd(v_1,e_1,1,1)$ for some $\vc vk\in V, \vc ek\in E$ (in fact, we
can claim that $e'''=e_1=\dots=e_k=e''$, but we do not need 
this). In particular,
$\vf(4,r)=(v_{k-3},e_{k-3},k-3,1)$ and $\vf(k-3,r)=(v_4,e_4,4,1)$; that is,  
these two vertices are mapped to non-fan vertices. Since both 
$(v_{k-3},e_{k-3},k-3,1)$ and  $(v_4,e_4,4,1)$ have at most
$2n+m$ neighbours, the weight of $\vf$ is at most 
$W_1^{4\ell_1}W_2^{6\ell_2}(2n+m)^{2\ell_2}$.

\medskip

\noindent
{\sc Case 2.} $\vf(1,1)=(v,e,1,r)$  for some $v\in V,e\in E$.

\medskip

This case is symmetric to {\sc Case 1} so we do not give full details. Using
Observation~\ref{obs:par} and the assumption that $\vf$ is c-c, we get
$\vf(k,1)=(v',e',k,r)$  for some $v'\in V,e'\in E$. Also, we get that
$\vf(1,r)=(v'',e'',1,1)$ or $\vf(1,r)=(v'',e'',k,1)$ for some $v''\in V, e''\in
E$. In the former case, as in {\sc Case 1.1} we get that $\vf$ necessarily is
skew identity, which is a contradiction. In the latter case, similarly to 
{\sc Case 1.2}, we get that 
$\vf(3,1)=(v_{k-2},e_{k-2},k-2,r), \vf(k-2,1)=(v_3,e_3,3,r)$ 
 for some $v_{k-2},v_3\in V,e_{k-2},e_3\in E$. Since
$(v_{k-2},e_{k-2},k-2,r)$ and $(v_3,e_3,3,r)$ are not fan vertices, 
as in {\sc Case~1.2}, the weight of $\vf$
does not exceed $W_1^{4\ell_1}W_2^{6\ell_2}(2n+m)^{2\ell_2}$. 
{\bf (End of proof of Claim 2.)}

\medskip

\noindent
{\bf Claim 3.} The number of homomorphisms of the $(k\tm r)$-grid 
to $H(G,k,W_1,W_2)$ is upper bounded by 
\[
(4W_1+8W_2+nmkr)^{kr}.
\]

\medskip
\noindent
{\bf Proof of Claim 3:}
Since $H(G,k,W_1,W_2)$ has no more than $4W_1+8W_2+nmkr$ 
vertices and the $(k\tm r)$-grid has $kr$ vertices, the claim follows.
{\bf (End of proof of Claim 3.)}

\medskip

By Claims~1 and 2, the maximum weight of a homomorphism that is not 
identity or skew identity is 
$W_1^{4\ell_1}W_2^{6\ell_2}(2n+m)^{2\ell_2}$. By Claim~3,
there are at most $(4W_1+8W_2+kr)^{kr}$ such homomorphisms.
The result follows.
\end{proof}

We now have all results required to relate the number of $k$-cliques in a given
graph $G$ and the number of homomorphisms from $L(k,r,\ell_1,\ell_2)$ to
$H(G,K,W_1,W_2)$, for appropriately chosen values of $\ell_1,\ell_2,W_1,W_2$.

\begin{lemma}\label{lem:approx}
Let $N\geq 0$ be the number of $k$-cliques in $G$, where $k=4k'$ for 
some $k'$, $n=V(G)$, $m=E(G)$,
and $2n+m>6$. Let $M=M(\ell_1,\ell_2,W_1,W_2)$ be the number
of homomorphisms from $L(k,r,\ell_1,\ell_2)$, where $r={k\choose2}$, to 
$H(G,k,W_1,W_2)$. If $W_2=(2n+m)^2$, 
$W_1=W_2^2$, $\ell_2=8kr$, and $\ell_1=17\ell_2$, then we have
\[
N<\frac M{2W_1^{4\ell_1}W_2^{8\ell_2}k!}<N+\frac12.
\]
\end{lemma}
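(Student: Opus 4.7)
The plan is to combine the two preparatory results, Lemma~\ref{lem:c-c-factorial} and Lemma~\ref{lem:non-c-c}, and then verify the resulting numerical inequality under the specified parameter choices.

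I first decompose $M = M_{\mathrm{id}} + M_{\mathrm{rest}}$, where $M_{\mathrm{id}}$ is the total weight of identity and skew identity homomorphisms and $M_{\mathrm{rest}}$ is the total weight of all other homomorphisms. Lemma~\ref{lem:c-c-factorial} gives $M_{\mathrm{id}} = 2NW_1^{4\ell_1}W_2^{8\ell_2}k!$, so dividing by $2W_1^{4\ell_1}W_2^{8\ell_2}k!$ yields
\[
\frac{M}{2W_1^{4\ell_1}W_2^{8\ell_2}k!} = N + \frac{M_{\mathrm{rest}}}{2W_1^{4\ell_1}W_2^{8\ell_2}k!}.
\]
The strict left inequality follows because $M_{\mathrm{rest}}>0$: collapsing the whole fan-grid onto a single edge of $H(G,k,W_1,W_2)$ provides a homomorphism that is neither identity nor skew identity, and any such grid homomorphism has weight at least one since each fan vertex's image has at least one neighbour.

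Next, I verify the hypothesis of Lemma~\ref{lem:non-c-c}. Since $W_1 = W_2^2$, we have $T = \log_{W_2} W_1 = 2$, so $\tfrac{8T\ell_2}{T-1} = 16\ell_2 < 17\ell_2 = \ell_1$. Applying Lemma~\ref{lem:non-c-c} gives
\[
M_{\mathrm{rest}} \leq W_1^{4\ell_1}W_2^{6\ell_2}(2n+m)^{2\ell_2}(4W_1+8W_2+nmkr)^{kr}.
\]
Using $W_2 = (2n+m)^2$, so $(2n+m)^{2\ell_2} = W_2^{\ell_2}$, and $\ell_2 = 8kr$, I simplify
\[
\frac{M_{\mathrm{rest}}}{2W_1^{4\ell_1}W_2^{8\ell_2}k!} \leq \frac{(4W_1+8W_2+nmkr)^{kr}}{2W_2^{8kr}k!}.
\]
From $nm \leq (2n+m)^2/4 = W_2/4$, $W_1 = W_2^2$, and $2n+m>6$ (so $W_2 \geq 49$), a routine estimate shows $4W_1+8W_2+nmkr \leq W_2^8$. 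Hence the ratio above is at most $1/(2k!) \leq 1/2$, which yields the right inequality.

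The only real obstacle is orchestrating the arithmetic so that the same choice of $W_1,W_2,\ell_1,\ell_2$ simultaneously satisfies the hypothesis $\ell_1 > 8T\ell_2/(T-1)$ of Lemma~\ref{lem:non-c-c} and the error-term bound; this is precisely why the factor $17$ appears in $\ell_1=17\ell_2$ (slightly above $16\ell_2$ from $T=2$), and why $\ell_2$ is chosen proportional to $kr$ with a generous constant so that $W_2^{\ell_2}$ dominates $(4W_1+8W_2+nmkr)^{kr}$. With the numerics arranged this way, the lemma is essentially a substitution.
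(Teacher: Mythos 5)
Your proof is correct and follows essentially the same route as the paper's: decompose $M$ into the (skew) identity contribution $2NW_1^{4\ell_1}W_2^{8\ell_2}k!$ from Lemma~\ref{lem:c-c-factorial} plus the remainder, bound the remainder via Lemma~\ref{lem:non-c-c} (whose hypothesis holds since $T=2$ and $17\ell_2>16\ell_2$), and check that the chosen parameters force the error term below $W_1^{4\ell_1}W_2^{8\ell_2}k!$ --- your closing estimate $4W_1+8W_2+nmkr\le W_2^{8}$ is just a slightly different packaging of the paper's $4W_1+8W_2+nmkr<6W_1$ computation. As a bonus, you explicitly justify the strictness of the left inequality ($M_{\mathrm{rest}}>0$ via collapsing the fan-grid onto an edge), a point the paper's proof leaves implicit.
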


\begin{proof}
Let $M_c$ be the total weight of identity and skew identity homomorphisms 
and let $M_n$ be the total weight of the remaining homomorphisms. 
By Lemma~\ref{lem:c-c-factorial},
$M_c=2W_1^{4\ell_1}W_2^{8\ell_2}k!\cdot N$.
Therefore if $N\ge1$ we only need to show that 

\begin{equation}\label{eq:goal}
M_n<\frac{M_c}{2N}.
\end{equation}
Since $N\ge1$, 
\begin{equation}\label{eq:mc}
\frac{M_c}{2N}= W_1^{4\ell_1}W_2^{8\ell_2}k!\ge W_1^{4\ell_1}W_2^{8\ell_2},
\end{equation}
and it suffices to show that $M_n<W_1^{4\ell_1}W_2^{8\ell_2}$.
If $N=0$ then it again suffices to show that 
\[
M_n<W_1^{4\ell_1}W_2^{8\ell_2}k!\ge W_1^{4\ell_1}W_2^{8\ell_2}.
\]
On the other hand, $\ell_1=17\ell_2$ by the conditions of the lemma, that
is, $\ell_1>\frac{8T\ell_2}{T-1}$, where $T=\log_{W_2}W_1=2$. 
Therefore we satisfy the conditions of Lemma~\ref{lem:non-c-c}, and 
we have
\begin{equation}\label{eq:non-c-c1}
M_n<W_1^{4\ell_1}W_2^{6\ell_2}(2n+m)^{2\ell_2}
\cdot (4W_1+8W_2+nmkr)^{kr}.
\end{equation}
Note that for $n,m>0$, 
\begin{equation}\label{eq:nm}
8W_2=8(2n+m)^2<(2n+m)^4=W_1.
\end{equation}
Also, as $k\le n, r\le m$,
\begin{equation}\label{eq:kr}
nmkr<(2n+m)^4=W_1.
\end{equation}
Using~(\ref{eq:nm}) and~(\ref{eq:kr}) in~(\ref{eq:non-c-c1}), we get
\begin{equation}\label{eq:non-c-c2}
M_n<
W_1^{4\ell_1}W_2^{6\ell_2}(2n+m)^{2\ell_2}\cdot (6W_1)^{kr}.
\end{equation}
By~(\ref{eq:mc}) and~(\ref{eq:non-c-c2}), in order to 
establish~(\ref{eq:goal}) it suffices to prove
\begin{equation}
W_1^{4\ell_1}W_2^{6\ell_2} (2n+m)^{2\ell_2}\cdot (6W_1)^{kr}
< 
W_1^{4\ell_1}W_2^{8\ell_2},
\end{equation}
or, equivalently, that
\begin{equation}\label{eq:s1}
  (2n+m)^{2\ell_2}\cdot (6W_1)^{kr} < W_2^{2\ell_2}.
\end{equation}
  Since $(2n+m)^{2\ell_2}=W_2^{\ell_2}$ and $W_1=W_2^2$, 
  inequality~(\ref{eq:s1}) is equivalent to 
\begin{equation}\label{eq:s2}
  6^{kr}\cdot W_2^{2kr} < W_2^{\ell_2}.
\end{equation}
Since $2n+m>6$, we have 
\begin{equation}\label{eq:s3}
6^{kr}<(2n+m)^{kr}.
\end{equation}
Multiplying both sides of inequality~(\ref{eq:s3}) by $(2n+m)^{4kr}$, we obtain
\begin{equation}\label{eq:s4}
  6^{kr}\cdot(2n+m)^{4kr}<(2n+m)^{5kr}.
\end{equation}
Since $W_2=(2n+m)^2$, inequality~(\ref{eq:s4}) can be rewritten as
\begin{equation}\label{eq:s5}
  6^{kr}\cdot W_2^{2kr}<(2n+m)^{5kr}.
\end{equation}
Finally, since $W_2=(2n+m)^2$ and $\ell_2=8kr$,
(\ref{eq:s5}) implies~(\ref{eq:s2}).
\end{proof}

Finally, as Lemmas~\ref{lem:non-c-c} and~\ref{lem:approx} are only proved
for $k=4k'$, we need to show that the problem for other values of
the parameter can be reduced to $k$ of such form. The following lemma
takes care of that. Let $4p$-{\nclique} denote the following problem

\probfpt
{$4p$-\nclique}
{A graph $G$ and $k\in\zN$.}
{$k$.}
{The number of cliques of size $4k$ in $G$.}

\begin{lemma}\label{lem:to-good-k}
There is a parameterised AP-reduction from $p$-{\nclique} to $4p$-\nclique.
\end{lemma}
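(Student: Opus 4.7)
The plan is to give a \emph{parsimonious} gadget reduction so that a single oracle call returns $k!\cdot N_k(G)$ up to the desired relative error, where $N_k(G)$ denotes the number of $k$-cliques in $G$. The whole point is to avoid inclusion--exclusion style combinations of oracle values, which are incompatible with multiplicative approximation.

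Given an instance $(G,k,\ve)$ of $p$-\nclique, first I would set $t=4\lceil k/4\rceil-k\in\{0,1,2,3\}$ and $k'=\lceil k/4\rceil$, then form the $k$-partite blow-up $G^{(k)}$ with vertex set $V(G)\tm[k]$ and edge $(u,i)(v,j)$ precisely when $i\ne j$ and $uv\in E(G)$. Being $k$-partite, $G^{(k)}$ has no clique of size exceeding $k$, and every $k$-clique of $G^{(k)}$ picks exactly one vertex per colour class, so $N_k(G^{(k)})=k!\cdot N_k(G)$. Then I would let $G''$ be the \emph{join} of $G^{(k)}$ with a fresh clique $K_t$ on $t$ new vertices (every vertex of $K_t$ adjacent to every vertex of $G^{(k)}$ and to every other vertex of $K_t$). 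A $(k+t)$-clique in $G''$ decomposes as an $a$-clique of $G^{(k)}$ plus a $(k+t-a)$-subset of $K_t$; the bounds $a\le k$ and $k+t-a\le t$ force $a=k$, so
\[
  N_{4k'}(G'')\;=\;N_{k+t}(G'')\;=\;N_k(G^{(k)})\;=\;k!\cdot N_k(G).
\]

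The reduction then calls the $4p$-\nclique oracle exactly once on $(G'',k',\ve)$, receives $X$ with $|X-k!\cdot N_k(G)|\le \ve\cdot k!\cdot N_k(G)$ (with probability $\ge 3/4$), and returns $X/k!$, which is a $(1+\ve)$-approximation of $N_k(G)$. The new parameter $k'=\lceil k/4\rceil\le k$ fits the computable bound $h(k)=k$; the error $\dl=\ve$ trivially satisfies $\dl^{-1}\le\mathsf{poly}(\ve^{-1})$; and the construction of $G''$ runs in time polynomial in $\|G\|$ and independent of $\ve^{-1}$. Hence all three conditions for a parameterised AP-reduction are met.

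The only conceptually non-routine step is making the reduction parsimonious: the naive ``join $G$ with a universal $K_t$'' yields $\sum_{i=0}^{t}\binom{t}{i}N_{k+i}(G)$, from which extracting $N_k(G)$ by subtraction would blow up the multiplicative error. The $k$-partite blow-up is the key device that collapses this sum to its $i=0$ term by capping clique sizes in $G^{(k)}$ at $k$; this is the one place that needs care, and the remainder of the argument is bookkeeping.
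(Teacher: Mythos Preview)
Your argument is correct and is a genuinely different (and cleaner) route than the paper's. The paper does not produce a parsimonious relationship: it iterates the step ``add $s=3n/\ve$ universal vertices and move from $k$ to $k{+}1$'' up to three times, using the inequality $N_{k+1}(G)<nN_k(G)$ to show that in $G^{+s}$ the count $sN_k+N_{k+1}$ is dominated by the first term; it then needs a separate blow-up preprocessing (their Claim~2) to guarantee $N$ is large enough that integer rounding does not spoil the multiplicative error. Your $k$-partite blow-up is the key device that makes the join with $K_t$ exactly parsimonious (the cap on clique size in $G^{(k)}$ kills precisely the higher-order terms that force the paper into a noise-bounding argument), so you get a single oracle call whose instance size does not depend on $\ve^{-1}$ and whose answer is an exact known multiple of $N_k(G)$.

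One small technical point you should tidy up: a RAS is required to output an integer, and $X/k!$ need not be one. The fix is trivial---call the oracle with tolerance $\ve/2$ and return the nearest integer to $X/k!$; when $N<1/\ve$ the oracle error is below $1/2$ so rounding recovers $N$ exactly, and when $N\ge 1/\ve$ the extra $1/2$ from rounding is at most $(\ve/2)N$. This does not change the structure of your reduction.
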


\begin{proof}
Let $G,k$ be an instance of $p$-{\nclique} and let $\ve\in(0,1)$ be an error tolerance.
If $k=4k'$ for some $k'$ then transform the instance to the instance 
$G,k'$ of $4p$-{\nclique} with the same error tolerance $\ve$. Otherwise 
repeat the following reduction as many times as required to obtain a 
parameter of the form $4k'$.

Suppose there is an FPRAS \textsf{Alg} that approximates the number of 
$(k+1)$-cliques in any graph. We construct graph $G^{+s}=(V',E')$ as 
follows. Let $\vc ws$ be vertices not belonging to $V$. Then set 
$V'=V\cup\{\vc ws\}$ and $E'=E\cup\{vw_i\mid v\in V, i\in[s]\}$,
that is, we connect all the new vertices with all vertices of $G$. The
following claim is easy to verify.

\medskip
\noindent
{\bf Claim 1.}
Let $N$ be the number of $k$-cliques in $G$ and let $N_1$ be the number 
of $(k+1)$-cliques in $G$. Then the number of $(k+1)$-cliques in 
$G^{+s}$ is $sN+N_1$. 

\medskip

Observe also that $N_1< nN$, because every $(k+1)$-clique contains 
a $k$-clique, and for every $k$-clique $C$ the number of $(k+1)$-cliques 
containing $C$ is at most $n-k$. Finally, we need the following observation.

\medskip
\noindent
{\bf Claim 2.}
In an instance $G,\ve$ of $k$-{\nclique}, the number $N$ of $k$-cliques
of $G$ can be assumed to be either 0 or greater than $3/2\ve$.

\medskip
\noindent
{\bf Proof of Claim 2.}
We show that there is a reduction from the general $k$-{\nclique} to
the probem admitting only instances with the restriction described in
Claim~2. The reduction makes use of the standard idea of blowing
up the vertices of $G$. Let $t$ be a natural number with 
$t>\left(\frac3{2\ve}\right)^{1/k}$. Construct $G^{(k)}$ by replacing
every vertex $v$ of $G$ with $\vc vt$, and every edge $vw$ with a 
complete bipartite graph on the vertices $\vc vt$, $\vc wt$. It is easy 
to see that every $k$-clique $v^1,\dots,v^k$ in $G$ gives rise to $t^k$ 
$k$-cliques in $G^{(k)}$ of the form $v^1_{i_1},\dots,v^k_{i_k}$.
Moreover, every $k$-clique of $G^{(k)}$ is of this form. Therefore
the number of $k$-cliqes in $G^{(k)}$ equals $t^kN$. By the choice 
of $t$
\[
t^kN > \left(\frac3{2\ve}\right)^{1/k\cdot k} N,
\]
and so if $N>0$, this number is greater than $3/2\ve$.
{\bf (End of proof of Claim~2.)}

\medskip

The reduction works as follows: Apply \textsf{Alg} to the instance 
$G^{+s},k+1$, where $s=\frac{3n}\ve$, with error tolerance $\ve/3$. 
If it returns a number 
$M$ output $\lfloor Q\rfloor$, where $Q=\frac Ms$. We now show that 
$(1-\ve)N<\lfloor Q\rfloor<(1+\ve)N$. By Claim~1 we have
\[
\left(1-\frac\ve3\right)(sN+N_1)<M<\left(1+\frac\ve3\right)(sN+N_1),
\]
or equivalently (by dividing by $s$), 
\[
\left(1-\frac\ve3\right)\left(N+\frac{N_1}s\right)<Q<\left(1+\frac\ve3\right)\left(N+\frac{N_1}s\right).
\]
Since by Claim~2 we assume that $N>3/2\ve$, we obtain 
\[
(1-\ve)N=\left(1-\frac\ve3\right)N-\frac{2\ve}3N<
\left(1-\frac\ve3\right)N-1<
\left(1-\frac\ve3\right)\left(N+\frac{N_1}s\right)-1,
\] 
implying $(1-\ve)N<\lfloor Q\rfloor$.

On the other hand, we have $N_1<nN$ and therefore
\[
\lfloor Q\rfloor\leq Q<\left(1+\frac\ve3\right)\left(N+\frac{N_1}s\right)<
\left(1+\frac\ve3\right)\left(N+\frac\ve3N\right)<(1+\ve)N,
\]
where in the middle inequality we used the choice of $s$. The result follows.
\end{proof}

In particular, Lemma~\ref{lem:to-good-k} establishes \#W[1]-hardness of the {$4p$-\nclique} problem.

\subsection{Putting the Pieces Together}
\label{subsec:pieces}

\begin{proof}[Proof of Theorem~\ref{thm:main}]
  As we mentioned earlier, conditions (1) and (4) are equivalent by
  Theorem~\ref{thm:exact} and the implications ``(1) $\Rightarrow$ (2)
  $\Rightarrow$ (3)'' are trivial.  

  The rest of the proof establishes ``$(3) \Rightarrow (4)$''. Assume that
  $\NCSP(\cC, -)$ admits an FPTRAS for a fan class $\cC$.
  Our goal is to show that $\cC$ has bounded treewidth. For the sake of contradiction,
  assume that $\cC$ has unbounded treewidth. We will exhibit
  a parameterised reduction from $p$-{\nclique} to $p$-$\NCSP(\cC,
  -)$, which gives an FPTRAS for $p$-{\nclique} assuming 
  an FPTRAS for
  $p$-$\NCSP(\cC, -)$. Under the assumption that FPT $\neq$ W[1] (under
  randomised parameterised reductions~\cite{Downey98:tcs}), the W[1]-hardness of
  $p$-{\clique} established in~\cite{DF95:fpt} implies,
  by~\cite[Corollary~3.17]{Meeks16:dam}, the non-existence of an 
  FPTRAS for the $p$-{\nclique} problem, a contradiction.

  Let $G=(V,E)$ and $k$ be an instance of the $p$-{\nclique} problem. 
  By Lemma~\ref{lem:to-good-k}, we can assume that $k=4k'$.
  First, we show that if $G$ has any $k$-cliques at all, it can be assumed 
to have many $k$-cliques. Let $s\in\nat$ and $G_s$ be defined as follows.
$V(G_s)=\{v_1\zd v_s\mid v\in V\}$ and $v_iw_j\in E(G_s)$, for 
$v,w\in V$ and $i,j\in[s]$, if and only if $vw\in E$. In other words,
every vertex $v$ of $G$ is replaced with $s$ distinct vertices $v_1\zd v_s$,
and every edge $vw$ is replaced with a complete bipartite graph $K_{s,s}$.

\medskip
\noindent
{\bf Claim 1.}
If $N$ is the number of $k$-cliques in $G$, then $G_s$ contains $s^kN$
$k$-cliques.

\medskip
\noindent
{\bf Proof of Claim 1.}
As is easily seen, for any indices $\vc ik\in[s]$ the vertices 
$v^1_{i_1}\zd v^k_{i_k}$ induce a clique in $G_s$ if and only if 
$v^1\zd v^k$ is a clique in $G$. Moreover, no clique in $G_s$ contains
vertices $v_i,v_j$ for $v\in V$ and $i,j\in[s]$. The result follows.
{\bf (End of proof of Claim~1.)}

\medskip

For a given instance $G=(V,E)$, $k$ of $p$-{\nclique} and error 
tolerance $\ve\in(0,1)$ using Claim~1, we first reduce it to the instance
$G_s$, $k$ of $p$-{\nclique}, where 
\[
s>\left(\frac{1+\ve/2}{\ve}\right)^{\frac1k}.
\]
Such a choice of $s$ guarantees that if $G_s$ contains any $k$-clique,
the number of $k$-cliques it contains is at least $\frac{1+\ve/2}\ve$. For simplicity we
will have this assumption directly for $G$. We will also assume that if 
 $n=|V|$ and $m=|E|$, then $2n+m>6$.

Now we construct an instance $\A,\B$ of $p$-$\NCSP(\cC, -)$ such that
an $\ve/2$-ap\-pro\-xi\-ma\-tion of the number of homomorphisms from
$\A$ to $\B$ yields an $\ve$-ap\-pro\-xi\-ma\-tion of the number of 
$k$-cliques in $G$. Structures $\A,\B$ will be chosen to be (essentially)
$\A=L(k,r,\ell_1,\ell_2)$ and $\B=H(G,k,W_1,W_2)$, where the parameters 
$\ell_1,\ell_2,W_1,W_2$ are set according to Lemma~\ref{lem:approx}.
%Let $r={k\choose 2}$, $\ell_1=17\ell_2$, and $\ell_2=8kr$. 

Since $\cC$ is a fan class and we assume that $\cC$ is not of bounded treewidth,
there is a structure $\A$ in $\cC$ such that
$L(k,r,\ell_1,\ell_2)$ is the Gaifman graph $G(\A)$ of $\A$. 

We enumerate the class $\cC$ until we find such an $\A$. First we argue that $\A$ can be assumed to be a $\tau$-structure where $\tau$ consists of a
single binary relation symbol; i.e., $\A$ is a graph and hence
$L(k,r,\ell_1,\ell_2)$ itself. Let $\A$ be a
$\tau$-structure whose Gaifman graph $G(\A)$ is $L(k,r,\ell_1,\ell_2)$. We show
how to construct a $\tau$-structure $\B$ whose Gaifman graph $G(\B)$ is
$H(G,k,W_1,W_2)$ such that the set of homomorphisms from $\A$ to $\B$ is
identical to the set of homomorphisms from $G(\A)$ to $G(\B)=H(G,k,W_1,W_2)$, where $W_1=(2n+m)^4$ and $W_2=(2n+m)^2$. 
The universe of $\B$ is the vertex set of $H(G,k,W_1,W_2)$. Let
$R\in\tau$ and take any $\tuple{x}\in R^\A$. Since
$L(k,r,\ell_1,\ell_2)$ does not contain triangles, $\tuple{x}$ consists
of at most two distinct elements, say $a,b\in A$. Let $I\subseteq[\ar{R}]$ be
the set of indices $i$ with $\tuple{x}[i]=a$. For every $u,v\in B$ with $uv$ an
edge in $H(G,k,W_1,W_2)$, we add (if it is not there already) to $R^\B$ the
tuples $\tuple{y}$ and $\tuple{z}$ defined by $\tuple{y}[i]=\tuple{z}[j]=u$ and $\tuple{y}[j]=\tuple{z}[i]=v$ for every $i\in I$ and $j\not\in I$. Now it is easy to see that a mapping $\varphi:A\to B$ is a homomorphism from $\A$ to $\B$ if and only if $\varphi$ is a homomorphism from $G(\A)$ to $G(\B)$.

Since the parameters $n,m,\ell_1,\ell_2,W_1,W_2$ satisfy the conditions of 
Lemma~\ref{lem:approx}, by that lemma we have
\begin{equation}\label{eq:M1}
  N<\frac M{2W_1^{4\ell_1}W_2^{8\ell_2}k!}<N+\frac12,
\end{equation}
  where $N$ is the number of $k$-cliques in $G$, which we want to approximate
  within $\ve$, and $M$ is the number of homomorphisms from $\A$ to $\B$, for which we have an
  FPTRAS by assumption. Let $Q=M/(2W_1^{4\ell_1}W_2^{8\ell_2}k!)$. 
 The FPTRAS for  $p$-$\NCSP(\cC, -)$ applied with error tolerance $\ve/2$
 produces a number $M'$ such that
 \begin{equation}\label{eq:M2}
   (1-\ve/2)M<M'<(1+\ve/2)M.
 \end{equation}
 We then return $\lfloor Q'\rfloor$, where 
 \[
 Q'=\frac{M'}{2W_1^{4\ell_1}W_2^{8\ell_2}k!}.
 \]
 It remains to show that $(1-\ve)N<\lfloor Q'\rfloor<(1+\ve)N$.
 On one hand, we have
 \[
 \lfloor Q'\rfloor>\lfloor(1-\ve/2) Q\rfloor\ge\lfloor(1-\ve/2)N\rfloor\ge(1-\ve)N,
 \]
 where the first inequality follows from~(\ref{eq:M2}) and the definitions of
 $Q$ and $Q'$, the second inequality follows from~(\ref{eq:M1}) and the
 definitions of $Q$ and $N$, and the third inequality is trivial provided $N$
 is large enough (which we can assume by Claim~2 from the proof of 
 Lemma~\ref{lem:to-good-k}).

 On the other hand, we have
 \[
 \lfloor Q'\rfloor \leq Q'<(1+\ve/2)Q<(1+\ve/2)\left(N+\frac12\right),
 \]
 where the first inequality is trivial, the second inequality follows
 from~(\ref{eq:M2}) and the third inequality follows from~(\ref{eq:M1}).

 Assume first that $N=0$. Then $Q'<\frac{1+\ve/2}2$, and 
 by the assumption $\ve<1$ we have $\lfloor Q'\rfloor=0$ as required.
 Otherwise by the assumption on the number of $k$-cliques in $G$,
 $N>\frac{1+\ve/2}\ve$; therefore
 \begin{eqnarray*}
 \lfloor Q'\rfloor &<& (1+\ve/2)\left(N+\frac12\right)=
 (1+\ve/2)N+\frac{1+\ve/2}2\\
 &<& (1+\ve/2)N+(\ve/2)N=(1+\ve)N.
\end{eqnarray*}
 
Observe that the reduction runs in time $f(k)\cdot \mathsf{poly}(n+m,\ve^{-1})$
and is a parameterised AP-reduction. Thus, the reduction gives an FPTRAS for $N$. Theorem~\ref{thm:main} is proved. 
\end{proof}

\section{Conclusions}
\label{sec:conclusion}

We do not know whether Theorem~\ref{thm:main} holds for all classes of
(bounded-arity) relational structures.

With more technicalities (but the
same ideas as presented here), one can weaken the
assumption on a fan class to obtain the same result (Theorem~\ref{thm:main}). In
particular, it suffices to require that there are polynomials $f_1,f_2,f_3,f_4$
such that for any parameters $k,r,\ell_1,\ell_2\in\nat$, $G(\cC)$ contains the
fan-grid $L(k',r',\ell'_1,\ell'_2)$, where $k'=f_1(k,r,\ell_1,\ell_2)\ge k$,
$r'=f_2(k,r,\ell_1,\ell_2)\ge r$, $\ell'_1=f_3(k,r,\ell_1,\ell_2)\ge\ell_1$,
$\ell_2'=f_4(k,r,\ell_1,\ell_2)\ge\ell_2$. This can be achieved by making use of
Lemma~\ref{lem:to-good-k} (as it would not be possible to test directly for cliques of all sizes) and by a
modification of the construction from Section~\ref{sec:construction} (to
accommodate for the fact that some fan-grids may not correspond to cliques due
to incompatible numbers).

\section*{Acknowledgements}

We would like to thank the anonymous referees of both the conference~\cite{Bulatov19:mfcs}
and this full version of the paper.

\bibliographystyle{plainurl}
\bibliography{bz-lhs-approx}

\end{document}